\title{Maintaining Discrete Probability Distributions in Practice}
\author{Daniel Allendorf}{Goethe-Universität Frankfurt}{daniel@ae.cs.uni-frankfurt.de}{https://orcid.org/0000-0002-0549-7576}{}
\authorrunning{D.~Allendorf}
\keywords{Algorithm Engineering, Data Structure, Dynamic Weighted Sampling, Discrete Random Variate, Discrete Probability Distribution}
\def\Oh#1{\ensuremath{\mathcal O\!\left(#1\right)}}
\def\impl#1{\normalfont{#1}}
\begin{document}

\maketitle

\normalsize

\begin{abstract}
	A classical problem in random number generation is the sampling of elements from a given discrete distribution.
	Formally, given a set of indices $S = \{1, \dots, n\}$ and sequence of weights $w_1, \dots, w_n \in \mathbb{R}^+$, the task is to provide samples from $S$ with distribution $p(i) = w_i / W$ where $W = \sum_j w_j$.
	A commonly accepted solution is Walker's Alias Table, which allows for each sample to be drawn in constant time.
	However, some applications correspond to a dynamic setting, where elements are inserted or removed, or weights change over time.
	Here, the Alias Table is not efficient, as it needs to be re-built whenever the underlying distribution changes.
	
	In this paper, we engineer a simple data structure for maintaining discrete probability distributions in the dynamic setting.
	Construction of the data structure is possible in time $\Oh{n}$, sampling is possible in expected time $\Oh{1}$, and an update of size $\Delta$ can be processed in time $\Oh{\Delta n / W}$.
	As a special case, we maintain an urn containing $W$ marbles of $n$ colors where with each update $\Oh{W / n}$ marbles can be added or removed in $\Oh{1}$ time per update.
	
	To evaluate the efficiency of the data structure in practice we conduct an empirical study.
	The results suggest that the dynamic sampling performance is competitive with the static Alias Table.
	Compared to existing more complex dynamic solutions we obtain a sampling speed-up of up to half an order of magnitude.
\end{abstract}

\setcounter{page}{1}

\section{Introduction}
\label{sec:introduction}
The task of sampling elements from a given discrete probability distribution occurs quite often as a basic building block for algorithms and software, for instance, in genetic algorithms \cite{golberg1989genetic, lipowski2012roulette}, simulation \cite{fox1995simulated, bratley2011guide, berenbrink2020simulating}, and generation of discrete structures \cite{hubschle2020linear, allendorf2023parallel}.
If only a small number of samples is required, then linear sampling suffices, i.e. for each sample we can in time $\Oh{n}$ draw a number $x \in [0, W]$ uniformly at random and identify the minimum index $i$ such that $x \leq \sum_{j=1}^i w_j$.
Often however, the number of required samples is much larger and in this case, it is preferable to first construct a data structure on top of the distribution which allows for the samples to be drawn more efficiently.
Moreover, some applications such as simulation are inherently dynamic, e.g. elements can be added or removed, or weights can change over time, and in this case, the task of the data structure becomes to maintain the probability distribution under such changes.
In this dynamic setting, a suitable data structure should allow for an efficient processing of updates to the underlying distribution, while also guaranteeing that the sampling performance does not degrade over time.

\subsection*{Related work}
A simple solution for the static case with optimal sampling time was given by Walker \cite{walker1977efficient}.
The method uses a data structure, called Alias Table, which contains in each column a pair of fractions of the weights of two indices such that the total weight of each pair sums to the mean $W / n$.
A sample can be drawn in time $\Oh{1}$ by selecting a column uniformly at random, and tossing a biased coin to select a row.
Kronmal and Peterson \cite{kronmal1979alias}, and Vose \cite{vose1991linear}, improved the construction time to $\Oh{n}$.
Bringmann and Larsen \cite{bringmann2013succinct} proposed more space-efficient variants.
Parallel construction algorithms for Alias Tables and their variants were given by H{\"u}bschle-Schneider and Sanders \cite{hubschle2022parallel}.

Maintaining a data structure which allows for efficient sampling even when the distribution is subject to changes over time is more challenging.
In particular, there is no known method to update the Alias Table without a full reconstruction even if only one weight changes by a small amount.
Moreover, while methods such as rejection sampling \cite{neumann1951various}, or storing the weights in a binary tree \cite{wong1980efficient}, are readily adapted to the dynamic setting \cite{d2022dynamic}, they generally do not allow for samples to be drawn in constant time.

A common approach taken by practicioners is to extend the Alias Table with a limited capability for updates.
For instance, \cite{berenbrink2020simulating} use a dynamized Alias Table as a building block for a simulation algorithm.
This method assumes an urn setting, where the weights are integers which indicate the number of marbles of a color.
Insertion or removal of a marble is possible in amortized constant time, but the total number of marbles must be at least the square of the number of colors.
Likewise, \cite{allendorf2023parallel} implement a graph generator by using an array which stores each index a number of times roughly proportional to its weight.
Samples are then drawn via rejection sampling and an increase in the weight of an index is processed by appending the index to the array an appropriate number of times, however, decreases in weight or removal of an index are not possible.

Taking on a more theoretical perspective, there also exist rather complex solutions which solve the problem in a near fully general setting or near optimal time. 
The data structure of Hagerup, Mehlhorn and Munro \cite{hagerup1993maintaining} provides samples in expected time $\Oh{1}$ and processes updates in time $\Oh{1}$ given that the weights $w_1, \dots, w_n$ are polynomial in $n$.
In complement, Matias, Vitter and Ni \cite{matias2003dynamic} proposes a data structure with expected sampling time $\Oh{\log^* n}$ and amortized expected update time $\Oh{\log^* n}$ with no restriction on the weights.
While these guarantees are suitable for most applications, implementing the required hierarchical data structures is rather difficult and their performance in practice is not known.

\subsection*{Our contribution}
We engineer a simple data structure for maintaining discrete probability distributions in the dynamic setting, called the Proposal Array.
Our method can be regarded as an extension and generalization of the methods used in \cite{berenbrink2020simulating} and \cite{allendorf2023parallel}.
It is intended as an easy to implement practical solution that complements the fully dynamic solutions of \cite{hagerup1993maintaining} and \cite{matias2003dynamic}.

We describe a static variant and two dynamic variants of the data structure.
The static Proposal Array can be constructed in time $\Oh{n}$ and allows for samples to be drawn in expected time $\Oh{1}$.
It uses rejection sampling similar to the Alias Table variant of \cite{bringmann2013succinct} but with different construction and rejection rules to facilitate dynamization.
Specifically, our rules are designed to avoid having to maintain a sorted order.
In comparison with the Alias Table the theoretical guarantee on the sampling performance is weaker but the construction algorithm is simpler and less susceptible to numerical errors due to floating point arithmetic.

The dynamic Proposal Array adds a procedure to update the data structure if the distribution changes.
Again each sample takes expected time $\Oh{1}$, and an update can be processed in time $\Oh{\Delta n / W}$ where $\Delta$ is the absolute difference in total weight between the previous and updated distribution.
The variant Proposal Array* improves the update procedure by eliminating the need to reconstruct the data structure after a certain number of updates.

The memory usage of all variants is dominated by maintaining an array which stores in each entry one of the indices $i \in S$.
For the static variant, the size of the array is at most $2n$ (see Lemma \autoref{th:prpll-con-fast}), and for the dynamic variants the size is at most $3n$ (see Lemmata \autoref{lem:p-suitable} and \autoref{lem:p-suitable-star}).

In comparison with the Dynamic Alias Table of \cite{berenbrink2020simulating}, we remove the restrictions on the weights, are able to process updates of larger size in constant time, and offer a variant which de-amortizes the processing time of updates.
Compared to \cite{allendorf2023parallel}, we add the possibility to decrease weights and remove indices, and analyze the resulting data structure in a general setting.

In an empirical study, we compare the sampling performance of the Proposal Array to the Alias Table and the data structures of \cite{hagerup1993maintaining} and \cite{matias2003dynamic}.
To the best of our knowledge, these are the first experiments to compare the performance of all the most well known dynamic solutions in practice.
The results suggest that dynamic sampling from the Proposal Array is similarly fast to static sampling from the Alias Table, and up to half an order of magnitude faster than sampling from the more complex solutions.
All our implementations including fast implementations of the Alias Table and \cite{hagerup1993maintaining} are openly available (see title page).

\subsection*{Organization}
Section \ref{sec:preliminaries} contains preliminaries and notation.
In Section \ref{sec:algorithm-1} we describe and analyze the static Proposal Array and in Sections \ref{sec:algorithm-2} and \ref{sec:algorithm-3} the dynamic variants.
Section \ref{sec:implementation} highlights optimization details of our implementations and Section \ref{sec:experiments} contains the experiments.

\section{Preliminaries}
\label{sec:preliminaries}
We denote a discrete probability distribution over $n$ elements as a pair $(S, \mathbf{w})$ which consists of an index set\footnote{It is possible to adapt our results to any other type of set by indexing the elements.} $S = \{1, \dots, n\}$ and a sequence of real-valued positive weights $\mathbf{w} = (w_1, \dots, w_n) \in \mathbb R_{>0}^n$.
The probability of index $i \in S$ under distribution $(S, \mathbf{w})$ is given by $p(i) = w_i / \sum_j w_j$.
For ease of notation, we denote the sum of all weights by $W = \sum_i w_i$ and use $\bar{w}$ as a shorthand for the mean $W / n$.

We analyze our algorithms and data structures in the randomized RAM \cite{DBLP:books/sp/Mehlhorn84} model of computation.
In particular, the following operations are assumed to take constant time: integer and fixed precision number arithmetic, the ceil and floor functions, computation of logarithms, and flipping a biased coin.

A \emph{growing array} is a data structure which can grow or shrink at the back and allows for random access to its memory locations.
For our purposes it suffices if each memory location stores one index $i \in S$.
We denote accessing location $l$ of an array $A$ by $A[l]$.
Locations are indexed $1, 2, \dots, |A|$.
Note that the maximum sizes of all arrays used by our algorithms can be determined up front if the maximum number of indices is known.
Thus, we can allocate all memory in advance which enables insertions at the back in time $\Oh{1}$ (as opposed to amortized time $\Oh{1}$).

For convenience, we define two multi-set like operations which are used to insert or delete an occurrence of an index without specifying the location.
\begin{itemize}
	\item Insert$(A, i)$: insert an occurrence of an index $i \in S$ into $A$.
	\item Erase$(A, i)$: erase an occurrence of an index $i \in S$ from $A$.
\end{itemize}

\noindent
Both operations can be implemented to take constant time with a standard trick.
\begin{lemma}
	\label{th:array-methods}
	Operations Insert$(A, i)$ and Erase$(A, i)$ can be implemented to take time $\Oh{1}$.
\end{lemma}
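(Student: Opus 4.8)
The plan is to implement both operations on top of the growing array $A$ using the classic ``swap-with-the-last-element'' trick for unordered arrays, augmented with auxiliary bookkeeping that lets us locate an occurrence of a given index in constant time. Since all memory can be pre-allocated (as noted in \autoref{sec:preliminaries}), appending to and truncating at the back of $A$ both cost $\Oh{1}$, and no element other than the one being inserted or deleted ever changes its stored value except during a single controlled swap.

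For Insert$(A, i)$ the implementation is immediate: I would place $i$ into the next free location $l = |A| + 1$ at the back of $A$ and increase $|A|$ by one. To support later erasures I would additionally record $l$ in a per-index stack $P_i$ holding the locations at which $i$ currently occurs, together with a back-pointer that tells, for location $l$, the position of $l$ within $P_i$. Each of these is a constant number of array reads, writes, and stack pushes.

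For Erase$(A, i)$ I first retrieve an arbitrary occurrence of $i$ by popping a location $l$ from $P_i$. I then delete it by overwriting $A[l]$ with the current last element $j = A[|A|]$ and decrementing $|A|$, so that $A$ stays contiguous without shifting any other entries. The subtlety is that the element $j$ has moved from location $|A|$ to location $l$, so its bookkeeping must be corrected: using the back-pointer stored for the old location $|A|$ I find the slot of $P_j$ that held $|A|$, overwrite it with $l$, and update the back-pointer for $l$ accordingly. The boundary case $l = |A|$, in which the erased occurrence is already at the back, is handled by simply popping without a swap.

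The one point requiring care, and the main obstacle, is keeping the two auxiliary structures mutually consistent in $\Oh{1}$ time per operation: each location of $A$ must know where it is referenced inside its index's stack, and conversely each stack entry must point back to a location of $A$. Maintaining this invariant across the swap, during which exactly one location changes its stored index, is precisely what makes the constant-time correction of $P_j$ possible, since otherwise locating $|A|$ inside $P_j$ would require a linear scan. Once the invariant is set up, a short case analysis over the swap and the $l = |A|$ boundary confirms correctness, and since every step touches a constant number of array slots and stack entries, both operations run in time $\Oh{1}$.
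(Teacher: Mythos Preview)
Your proposal is correct and follows essentially the same approach as the paper: per-index location lists (your stacks $P_i$, the paper's arrays $L_i$) together with a global back-pointer array, and erasure via swap-with-last. The only cosmetic difference is that the paper pops the back of $A$ first and then overwrites the freed slot, whereas you read the last element and overwrite before truncating; both variants are equivalent and handle the $l=|A|$ boundary case correctly.
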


\begin{proof}
	For each index $i \in S$ we maintain an array $L_i$ containing all locations $l$ such that $A[l] = i$ and one global array $L$ containing for each $1 \leq l \leq |A|$ the entry $L[l] = k$ where $k$ is the location in $L_i$ such that $L_i[k] = l$.
	Now, to perform Insert$(A, i)$, we increase the size of $A$ by one, then write $i$ into the new empty location $|A|$, and finally, append $|A|$ at the back of $L_i$ and $|L_i|$ at the back of $L$.
	To perform Erase$(A, i)$, we remove the back elements $j$ from $A$, $k$ from $L$ and $l$ from $L_i$, and then set $A[l] \gets j$, $L_j[k] \gets l$ and $L[l] \gets k$.
\end{proof}

\section{Static Proposal Array}
\label{sec:algorithm-1}
We start by describing the method for the static case.
The idea is to maintain a growing array $P$, called the \emph{proposal array}, which contains each index $i \in S$ a number of times that is proportional to its weight $w_i$ up to a small rounding error.
We call the number of times that an index $i \in S$ is contained in $P$ the \emph{count} of $i$, and write this quantity as $c_i$.
It is useful to think of $P$ as a compression of the distribution $(S, \mathbf{w})$.
From this perspective, a loss of information arises due to the fact that $c_i$ has to take an integer value and cannot represent $w_i$ exactly.
Still, we can efficiently recover the distribution via rejection sampling: by selecting a uniform random entry in $P$, we propose index $i$ with probability proportional to $c_i$, and by accepting a proposed index $i$ as output with probability proportional to $w_i / c_i$, we output $i$ with probability proportional to $w_i$.

\subsection{Construction}

\begin{algorithm2e}[t]
	\KwData{Distribution $(S, \mathbf{w})$}
	\KwResult{Proposal array $P$ for $(S, \mathbf{w})$}
	Initialize array $P$\;
	Calculate mean $\bar{w} = W / n$\;
	\For{$i \in S$}{
		$c \gets \lceil w_i / \bar{w} \rceil$\;
		\For{$k \gets 1$ \KwTo $c $}{
			Insert $i$ into $P$\;
		}
	}
	Return $P$\;
	\caption{Construct}
	\label{algo:construct}
\end{algorithm2e}

To construct a suitable array $P$ for a given distribution $(S, \mathbf{w})$, we first calculate the mean $\bar{w} = W / n$.
Then, for each element $i \in S$, we compute\footnote{We stress that $w_i / \bar{w}$ scales with $n$. Thus, even for small weights $w_1, \dots, w_n < 1$ suitable counts are obtained.} $c = \lceil w_i / \bar{w} \rceil$, and insert $i$ into $P$ exactly $c $ times (see Algorithm \ref{algo:construct}).

\subsection{Sampling}

\begin{algorithm2e}[t]
	\KwData{Proposal array $P$ for $(S, \mathbf{w})$}
	\KwResult{Sample $i$ from $S$ with $p(i) = w_i / W$}
	\Repeat{\normalfont{Accepted}}{
		Select $i$ from $P$ uniformly at random\;
		Accept $i$ with probability $\frac{w_i}{c_i} / \max_{j \in S} \frac{w_j}{c_j}$\;
	}
	Return $i$\;
	\caption{Sample}
	\label{algo:sample}
\end{algorithm2e}

\noindent
Sampling is done in two steps (see Algorithm \ref{algo:sample}).
In step~1, we propose an index $i$ by selecting an entry of $P$ uniformly at random.
In step~2, we accept $i$ as output with probability $(w_i / c_i) / \max_{j \in S} (w_j / c_j)$, otherwise, we reject $i$, and restart from step 1.

\subsection{Analysis}
We now analyze the algorithms given in the previous subsections.
To start, we establish the correctness of the sampling algorithm.
\begin{theorem}
	\label{th:prpll-correct}
	Given a proposal array $P$ for the distribution $(S, \mathbf{w})$, Algorithm \ref{algo:sample} (Sample) outputs a given index $i \in S$ with probability $p(i) = w_i / W$.
\end{theorem}

\begin{proof}
	The probability of proposing and accepting a given index $i$ in any given iteration of the loop in steps $1-4$ of Algorithm \ref{algo:sample} is
	\begin{equation}
		\nonumber
		\underbrace{\frac{c_i}{|P|}}_{i \text{ is proposed}} \underbrace{\frac{w_i}{c_i \max_{j \in S} \frac{w_j}{c_j}}}_{i \text{ is accepted}} = \frac{w_i }{|P| \max_{j \in S} \frac{w_j}{c_j}} := \frac{w_i}{W'}.
	\end{equation}
	Regarding $W'$, it holds that
	\begin{equation}
		\nonumber
		W' = |P| \max_{j \in S} \frac{w_j}{c_j} \geq \sum_{1 \leq p \leq |P|} \frac{w_{P[p]}}{c_{P[p]}} = \sum_{j \in S} w_j = W
	\end{equation}
	and thus $W / W'$ is a probability.
	With the remaining probability $q = 1 - W / W'$, no index is accepted and the loop moves on to the next iteration.
	Thus, the overall probability of sampling $i$ is
	\begin{equation}
		\nonumber
		\frac{w_i}{W'} (1 + q + q^2 + \dots) = \frac{w_i}{W'} \sum_{k = 0}^\infty q^k = \frac{w_i}{W'} \frac{1}{1 - q} = \frac{w_i}{W}
	\end{equation}
	as claimed.
\end{proof}

\noindent
Next, we show the efficiency of the construction algorithm.
\begin{theorem}
	\label{th:prpll-con-fast}
	Given a distribution $(S, \mathbf{w})$ with $|S|~=~n$, Algorithm \ref{algo:construct} (Construct) outputs a proposal array $P$ for $(S, \mathbf{w})$ in time $\Oh{n}$.
\end{theorem}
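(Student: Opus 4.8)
The plan is to bound the total work done by Algorithm~\ref{algo:construct} by showing that it consists of a constant amount of preprocessing plus a number of Insert operations equal to the final size $|P|$ of the proposal array. Since each Insert takes time $\Oh{1}$ by Lemma~\ref{th:array-methods}, the running time is dominated by $|P| = \sum_{i \in S} c_i$, so the crux is to prove that $\sum_{i \in S} c_i = \Oh{n}$.

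First I would account for the cost of the outer structure. Computing $\bar{w} = W / n$ requires one pass summing the $n$ weights, taking time $\Oh{n}$. The outer loop runs over the $n$ indices, and for each index $i$ the computation of $c = \lceil w_i / \bar{w} \rceil$ takes constant time in the assumed RAM model. The inner loop performs exactly $c_i$ Insert operations for index $i$, each of which is $\Oh{1}$. Hence the total time is $\Oh{n + \sum_{i \in S} c_i}$, and it remains to control the summed counts.

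The key step is the bound on $\sum_{i \in S} c_i$. Here I would use the definition $c_i = \lceil w_i / \bar{w} \rceil$ together with the elementary inequality $\lceil x \rceil \leq x + 1$ for all real $x$. This gives
\begin{equation}
\nonumber
\sum_{i \in S} c_i = \sum_{i \in S} \left\lceil \frac{w_i}{\bar{w}} \right\rceil \leq \sum_{i \in S} \left( \frac{w_i}{\bar{w}} + 1 \right) = \frac{1}{\bar{w}} \sum_{i \in S} w_i + n = \frac{W}{W / n} + n = 2n.
\end{equation}
Thus $|P| = \sum_{i \in S} c_i \leq 2n$, which simultaneously confirms the array-size claim made earlier in the paper (the size of $P$ is at most $2n$) and shows that the number of Insert operations is $\Oh{n}$.

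Combining the two pieces, the overall running time is $\Oh{n + 2n} = \Oh{n}$. I would also remark that the output genuinely is a proposal array for $(S, \mathbf{w})$, since by construction each index $i$ is inserted exactly $c_i = \lceil w_i / \bar{w} \rceil$ times, so its count in $P$ is proportional to $w_i$ up to the rounding inherent in the ceiling, as required by the definition. The only subtlety to watch is the lower bound $c_i \geq 1$: since $w_i > 0$ and $\bar{w} > 0$, we have $w_i / \bar{w} > 0$, so $\lceil w_i / \bar{w} \rceil \geq 1$ and every index is represented at least once, guaranteeing that $\max_{j \in S} (w_j / c_j)$ in the sampling step is well defined. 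I do not anticipate a genuine obstacle here; the argument is a direct counting bound, and the ceiling inequality is the single ingredient that makes the linear size (and hence linear construction time) fall out cleanly.
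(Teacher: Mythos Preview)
Your proof is correct and follows essentially the same approach as the paper: observe that the running time is linear in $|P|$ and then bound $|P| = \sum_{i\in S}\lceil w_i/\bar w\rceil \leq 2n$ via $\lceil x\rceil \leq x+1$. Your write-up is simply more explicit about the per-step costs and adds the (harmless) observation that $c_i \geq 1$.
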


\begin{proof}
	Observe that Construct runs in time linear in the size of the constructed array $P$.
	In addition, we have
	\begin{align}
		\nonumber
	 	|P| &= \sum_{i \in S} c_i = \sum_{i \in S} \left\lceil \frac{w_i n}{W} \right\rceil \leq \sum_{i \in S} \left(1 + \frac{w_i n}{W} \right)
		\\
		\nonumber
		&= n + \frac{\sum_{i \in S} w_i }{W} n = n + \frac{W}{W} n = 2 n
	\end{align}
	which shows the claim.
\end{proof}

\noindent
Finally, we show the efficiency of the sampling algorithm.
\begin{theorem}
	\label{th:prpll-sample-fast}
	Given a proposal array $P$ for a distribution $(S, \mathbf{w})$ output by Algorithm \ref{algo:construct} (Construct),
	Algorithm \ref{algo:sample} (Sample) runs in expected time $\Oh{1}$.
\end{theorem}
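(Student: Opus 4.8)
The plan is to observe that Algorithm \ref{algo:sample} is a rejection sampling loop in which every single iteration costs only constant time: selecting a uniform entry of $P$ is $\Oh{1}$ by random access, and evaluating the acceptance probability $(w_i/c_i)/\max_{j \in S}(w_j/c_j)$ and flipping the corresponding biased coin are $\Oh{1}$ under the RAM assumptions (the maximum $\max_{j \in S}(w_j/c_j)$ can be precomputed once during construction and stored). Hence the expected running time equals $\Oh{1}$ times the expected number of loop iterations, and the whole task reduces to bounding the latter.

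Next I would invoke the analysis already carried out in the proof of Theorem \autoref{th:prpll-correct}. There it is shown that in any fixed iteration a sample is accepted with probability $W / W'$, where $W' = |P| \max_{j \in S} (w_j / c_j)$. Since iterations are independent and identically distributed, the number of iterations until acceptance is geometrically distributed with success probability $W/W'$, so its expectation is exactly $W'/W$. It therefore suffices to show that $W'/W = \Oh{1}$, i.e.\ that $W'$ is within a constant factor of $W$.

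To bound $W'$ I would control its two factors separately. For the first factor, Theorem \autoref{th:prpll-con-fast} already gives $|P| \leq 2n$. For the second factor, I would use the construction rule $c_j = \lceil w_j / \bar{w} \rceil \geq w_j / \bar{w}$, which yields $w_j / c_j \leq \bar{w} = W/n$ for every $j \in S$, and hence $\max_{j \in S} (w_j / c_j) \leq W/n$. Multiplying the two bounds gives
\begin{equation}
	\nonumber
	W' = |P| \max_{j \in S} \frac{w_j}{c_j} \leq 2n \cdot \frac{W}{n} = 2W,
\end{equation}
so $W'/W \leq 2$. Thus the expected number of iterations is at most $2$, and combined with the $\Oh{1}$ cost per iteration this establishes the claimed expected running time of $\Oh{1}$.

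I do not expect a genuine obstacle here; the argument is short once the geometric-iteration expectation is recognized. The only point requiring a little care is seeing that both ingredients are needed simultaneously — the $|P| \leq 2n$ upper bound from construction and the $\max_{j \in S}(w_j/c_j) \leq \bar{w}$ bound coming directly from the ceiling in the count rule — and that it is precisely the choice $c_j = \lceil w_j / \bar{w} \rceil$ that makes these two bounds multiply to a constant.
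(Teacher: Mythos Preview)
Your proposal is correct and follows essentially the same route as the paper: bound the per-iteration acceptance probability $W/W'$ by combining $|P|\le 2n$ from Theorem~\ref{th:prpll-con-fast} with $\max_j w_j/c_j \le \bar w$ from the ceiling rule, yielding an expected number of at most $2$ iterations. The only cosmetic difference is that the paper phrases the bound as $p>1/2$ rather than $W'/W\le 2$, and uses a strict inequality (which in fact need not be strict when some $w_j/\bar w$ is an integer; your non-strict version is the accurate one).
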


\begin{proof}
	Observe that the running time of Sample is asymptotic to the number of iterations of the loop in steps $1-4$ of Algorithm \ref{algo:sample}.
	The loop terminates if a proposed index is accepted, and the probability $p$ of accepting after any given iteration is
	\begin{align}
		\nonumber
		p = 1 - q = \frac{W}{W'} = \frac{W}{|P| \max_{j \in S} \frac{w_j}{c_j}}.
	\end{align}
	In the proof of Theorem \autoref{th:prpll-con-fast}, we have shown that
	\begin{align}
		\nonumber
		|P| \leq 2n
	\end{align}
	and by the construction rule $c_i = \lceil w_i / \bar{w} \rceil = \lceil w_i n / W \rceil$, it holds that
	\begin{align}
		\nonumber
		\max_{j \in S} \frac{w_j}{c_j} = \max_{j \in S} \frac{w_j}{\lceil w_j \frac{n}{W} \rceil} < \max_{j \in S} \frac{w_j}{w_j \frac{n}{W}} = \frac{W}{n}.
	\end{align}
	Thereby, we have
	\begin{align}
		\nonumber
		p > \frac{1}{2}
	\end{align}
	which implies that the expected number of iterations is less than $2$.
\end{proof}

\section{Dynamic Proposal Array}
\label{sec:algorithm-2}
We now dynamize the data structure given in the previous section.
To this end, we define a maintenance procedure Update which is called when the weight of a single index changes, or an index is added to or removed from the index set.

To model the state of the distribution and data structure at different time steps we consider the sequences $(S, \mathbf{w})^{(0)}, \dots, (S, \mathbf{w})^{(t)}$ and $P^{(0)}, \dots, P^{(t)}$.
In particular, $(S, \mathbf{w})^{(0)}$ denotes the initial distribution and $P^{(0)}$ the proposal array $P$ obtained by calling Algorithm \ref{algo:construct} (Construct) with the initial distribution as input.
Given some distribution $(S, \mathbf{w})^{(t)}$ with $t \geq 0$, if either (1) the weight of a single index changes, or (2) a new index is added, or (3) an index is removed, we write the new distribution as $(S, \mathbf{w})^{(t+1)}$.
Similarly, if $P^{(t)}$ is a proposal array for $(S, \mathbf{w})^{(t)}$, then we obtain $P^{(t+1)}$ by calling Algorithm \ref{algo:update} (Update) if (1) the weight of a single index changed, or if (2) a new index was added, or if (3) an index was removed.

Note that we will need to to call Algorithm \ref{algo:construct} (Construct) again if the mean $\bar{w}^{(t)} = W^{(t)} / n^{(t)}$ deviates too far from its initial value $\bar{w}^{(0)} = W^{(0)} / n^{(0)}$.
To this end, we keep track of the last step in which Construct was called via an additional variable $r$ with initial value $r \gets 0$.

\subsection{Update}
We now describe the Update procedure in detail.
Let $t$ denote the current step.
Then, if the weight of index $i$ is updated to a new value $w_i^{(t)}$, we update $P$ as shown in Algorithm~\ref{algo:update}.
First, we compute the new count as $c' = \lceil w_i^{(t)} / \bar{w}^{(r)} \rceil$.
Then, if $c'$ is larger than the old count $c_i$, we insert $i$ exactly $c' - c_i$ times into $P$.
Otherwise, if $c'$ is smaller than $c_i$, we erase $c_i - c'$ entries that contain $j$ from $P$.
Finally, if $\bar{w}^{(t)} < \bar{w}^{(r)} / 2$ or $\bar{w}^{(t)} > 2 \bar{w}^{(r)}$, we rebuild $P$ via Algorithm \ref{algo:construct} (Construct) and set $r \gets t$.

\begin{algorithm2e}[t]
	\KwData{Distribution $(S, \mathbf{w})^{(t)}$, index $i$ of changed weight, proposal array $P$, time of last reconstruction $r$}
	\uIf{$\bar{w}^{(t)} < \bar{w}^{(r)} / 2$ or $\bar{w}^{(t)} > 2 \bar{w}^{(r)}$}{
		$P \gets \text{Construct}((S, \mathbf{w})^{(t)})$\;
		$r \gets t$\;
	}
	\uElse{
		$c' \gets c_i$ \quad \quad \quad \quad \textbf{ if } $i \in S^{(t-1)}$ \textbf{ else } 0\;
		$c \gets \lceil w_i^{(t)} / \bar{w}^{(r)} \rceil$\hspace{0.11em} \textbf{ if } $i \in S^{(t)}$\quad \hspace{0.06em} \textbf{ else } 0\;
		\uIf{$c' > c$}{
			\For{$k \gets c$ \KwTo $c'$}{
				Insert $i$ into $P$\;
			}
		}
		\uElseIf{$c' < c$}{
			\For{$k \gets c'$ \KwTo $c$}{
				Erase $i$ from $P$\;
			}
		}
	}
	\caption{Update}
	\label{algo:update}
\end{algorithm2e}

\subsection{Analysis}
\label{subsec:analysis-2}
It is straightforward to check that Theorems \ref{th:prpll-correct} and \ref{th:prpll-con-fast} still apply, so we focus on the efficiency of the sampling and update procedures.

To help with the analysis we first give a condition under which a proposal array allows for efficient rejection sampling.
\begin{definition}
	\label{def:p-suitable}
	A proposal array $P$ is $\alpha$-\emph{suitable} for a distribution $(S, \mathbf{w})$ iff for all $i \in S$, we have
	\begin{equation}
		\nonumber
		\frac{1}{\alpha} \frac{w_i}{\bar{w}} \leq c_i \leq \left \lceil \alpha \frac{w_i}{\bar{w}} \right \rceil.
	\end{equation}
\end{definition}
As exemplified further below, the expected number of trials of rejection sampling for an $\alpha$-suitable proposal array is at most $\alpha^2 + \alpha$.
Thus, our goal is to show that $\alpha$ is a small constant for a proposal array maintained by the update procedure.

\begin{lemma}
	\label{lem:p-suitable}
	Algorithm \ref{algo:construct} (Construct) outputs a $1$-suitable proposal array $P^{(0)}$ for the initial distribution $(S, \mathbf{w})^{(0)}$.
	Furthermore, Algorithm \ref{algo:update} (Update) maintains a $2$-suitable proposal array $P^{(t)}$ for distribution $(S, \mathbf{w})^{(t)}$.
\end{lemma}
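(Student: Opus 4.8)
The plan is to handle the two assertions separately. The first assertion, that Construct produces a $1$-suitable array, is immediate from the construction rule: setting $\alpha = 1$ in Definition \ref{def:p-suitable}, the lower bound $w_i / \bar{w} \leq c_i$ and the upper bound $c_i \leq \lceil w_i / \bar{w} \rceil$ both reduce to the identity $c_i = \lceil w_i / \bar{w} \rceil$ used in Algorithm \ref{algo:construct}, since the ceiling of a quantity is always at least that quantity. This takes care of $P^{(0)}$ and, with $r = t$, of every array produced immediately after a reconstruction inside Update.

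For the second assertion I would first isolate the structural invariant that drives the whole argument, namely that at every step $t$ each count satisfies $c_i = \lceil w_i^{(t)} / \bar{w}^{(r)} \rceil$, where $r \leq t$ is the step of the most recent reconstruction. I would prove this by induction on the update sequence. The base case is the initial construction (or any reconstruction), where $r = t$ and the claim is exactly the construction rule. For the inductive step, note that a non-reconstructing call to Update alters only the count of the single index whose weight changed, recomputing it against the frozen mean $\bar{w}^{(r)}$; every other index keeps both its weight and its count, so its count still equals $\lceil w_j^{(t)} / \bar{w}^{(r)} \rceil$. Insertions and removals are the special cases $c' = 0$ and $c = 0$ and are covered by the same bookkeeping.

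With the invariant in hand, $2$-suitability follows from the reconstruction guard. Whenever Update does not reconstruct, the test $\bar{w}^{(t)} < \bar{w}^{(r)}/2$ or $\bar{w}^{(t)} > 2\bar{w}^{(r)}$ fails, so $\bar{w}^{(r)}/2 \leq \bar{w}^{(t)} \leq 2\bar{w}^{(r)}$, i.e. $\bar{w}^{(r)}$ and $\bar{w}^{(t)}$ agree up to a factor of two. For the upper bound of Definition \ref{def:p-suitable} with $\alpha = 2$, I would use $\bar{w}^{(t)} \leq 2\bar{w}^{(r)}$ to get $w_i^{(t)}/\bar{w}^{(r)} \leq 2 w_i^{(t)}/\bar{w}^{(t)}$ and then apply monotonicity of the ceiling to $c_i = \lceil w_i^{(t)}/\bar{w}^{(r)} \rceil$. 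For the lower bound I would use $\bar{w}^{(r)} \leq 2\bar{w}^{(t)}$ together with $c_i \geq w_i^{(t)}/\bar{w}^{(r)}$ to obtain $c_i \geq \tfrac{1}{2}\, w_i^{(t)}/\bar{w}^{(t)}$. I expect the only real subtlety to be the invariant itself: one must be careful that counts set at different update times since the last reconstruction are all measured against the same frozen mean $\bar{w}^{(r)}$, so that a single factor-two comparison suffices for every index simultaneously; the two final inequalities are then routine.
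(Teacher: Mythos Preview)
Your proposal is correct and follows essentially the same approach as the paper: establish the invariant $c_i^{(t)} = \lceil w_i^{(t)} / \bar{w}^{(r)} \rceil$ between reconstructions, use the reconstruction guard to bound $\bar{w}^{(r)}$ and $\bar{w}^{(t)}$ within a factor of two of each other, and then derive both inequalities of Definition~\ref{def:p-suitable} with $\alpha = 2$ by comparing the frozen mean to the current one. The only difference is that you spell out the induction for the invariant explicitly, whereas the paper simply asserts it; the two final inequalities are computed identically.
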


\begin{proof}
	The first claim follows immediately by the construction rule $c_i^{(0)} = \lceil w_i^{(0)} / \bar{w}^{(0)} \rceil$.
	For the second claim, observe that steps $1-3$ of Algorithm \ref{algo:update} (Update) guarantee that
	\begin{equation}
		\nonumber
		\frac{1}{2} \bar{w}^{(r)} \leq \bar{w}^{(t)} \leq 2 \bar{w}^{(r)}.
	\end{equation}
	Between reconstructions, the count $c_i^{(t)}$ of index $i$ always equals $c_i^{(t)} = \lceil w_i^{(t)} / \bar{w}^{(r)} \rceil$, so it holds that
	\begin{equation}
		\nonumber
		c_i^{(t)} = \left \lceil \frac{w_i^{(t)}}{\bar{w}^{(r)}} \right \rceil \geq \frac{w_i^{(t)}}{\bar{w}^{(r)}} \geq \frac{1}{2} \frac{w_i^{(t)}}{\bar{w}^{(t)}}
	\end{equation}
	and
	\begin{equation}
		\nonumber
		c_i^{(t)} = \left \lceil \frac{w_i^{(t)}}{\bar{w}^{(r)}} \right \rceil \leq \left \lceil 2 \frac{w_i^{(t)}}{\bar{w}^{(t)}} \right \rceil
	\end{equation}
	which completes the proof.
\end{proof}

\noindent
We can now show the efficiency of sampling from a dynamic Proposal Array.
\begin{theorem}
	\label{th:dyn-sample-efficient}
	Given an $2$-suitable proposal array $P$ for distribution $(S, \mathbf{w})$, Algorithm \ref{algo:sample} (Sample) runs in expected time $\Oh{1}$.
\end{theorem}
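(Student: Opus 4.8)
The plan is to mirror the argument used for the static case in Theorem~\ref{th:prpll-sample-fast}: since the running time of Sample is proportional to the number of loop iterations, which are independent Bernoulli trials, it suffices to bound the per-iteration acceptance probability $p$ away from zero by a constant. As established in that proof, this probability equals
\begin{equation}
	\nonumber
	p = \frac{W}{W'} = \frac{W}{|P| \max_{j \in S} \frac{w_j}{c_j}},
\end{equation}
so I would reduce the theorem to upper-bounding the two factors $|P|$ and $\max_{j} w_j / c_j$ by suitable multiples of $n$ and $W/n$, respectively, now using $2$-suitability in place of the sharper construction rule available in the static setting.

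First I would bound the array length using the upper inequality of $2$-suitability, $c_i \leq \lceil 2 w_i / \bar{w} \rceil \leq 1 + 2 w_i / \bar{w}$. Summing over $i \in S$ and using $\sum_i w_i / \bar{w} = W / \bar{w} = n$ gives $|P| = \sum_i c_i \leq 3n$, generalizing the $|P| \leq 2n$ bound of Theorem~\ref{th:prpll-con-fast}. Next I would bound the maximum acceptance weight using the lower inequality $c_j \geq \tfrac{1}{2} w_j / \bar{w}$, which rearranges to $w_j / c_j \leq 2 \bar{w} = 2 W / n$ for every $j \in S$, hence $\max_{j} w_j / c_j \leq 2 W / n$.

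Combining the two bounds yields $W' = |P| \max_j w_j / c_j \leq 3n \cdot 2 W / n = 6 W$, so that $p \geq 1/6$ and the expected number of iterations is at most $6 = \Oh{1}$. The same computation for a general $\alpha$-suitable array gives $|P| \leq (1+\alpha)n$ and $\max_j w_j/c_j \leq \alpha W/n$, whence $W' \leq (\alpha^2 + \alpha) W$, recovering the expected-trial bound advertised after Definition~\ref{def:p-suitable}.

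The main subtlety, rather than any genuine obstacle, is selecting the correct direction of the suitability inequality for each factor: the \emph{lower} bound on $c_j$ is what controls $\max_j w_j/c_j$ from above, while the \emph{upper} bound on $c_i$ controls $|P|$ from above, and the two must be combined so that the $w_j$ and $\bar{w}$ factors cancel cleanly. Care is also needed with the ceiling in the upper bound, where I would use $\lceil x \rceil \leq 1 + x$; note that unlike the static case the strict inequality $\max_j w_j / c_j < W/n$ no longer holds, which is precisely the reason the guaranteed acceptance probability degrades from $1/2$ to $1/6$.
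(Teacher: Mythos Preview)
Your proposal is correct and follows essentially the same route as the paper's own proof: bound $|P|\le 3n$ via the upper suitability inequality (with $\lceil x\rceil\le 1+x$), bound $\max_j w_j/c_j\le 2W/n$ via the lower suitability inequality, and combine to obtain $p\ge 1/6$. Your additional remark on the general $\alpha$-suitable case, yielding an expected-trial bound of $\alpha^2+\alpha$, is also accurate and matches the comment made in the paper after Definition~\ref{def:p-suitable}.
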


\begin{proof}
	Again the asymptotic running time of Sample equals the number of iterations of the loop in steps $1-4$ of Algorithm \ref{algo:sample} and the probability $p$ of exiting the loop after any given iteration is
	\begin{align}
		\nonumber
		p = 1 - q = \frac{W}{W'} = \frac{W}{|P| \max_{j \in S} \frac{w_j}{c_j}}.
	\end{align}
	As $P$ is $2$-suitable for the distribution, we have
	\begin{equation}
		\nonumber
		\frac{1}{2} \frac{w_i n}{W} = \frac{1}{2} \frac{w_i}{\bar{w}} \leq c_i \leq \left \lceil 2 \frac{w_i}{\bar{w}} \right \rceil = \left \lceil 2 \frac{w_i n}{W} \right \rceil
	\end{equation}
	for each $i \in S$.
	Therefore
	\begin{equation}
		\nonumber
		|P| = \sum_{i \in S} c_i \leq \sum_{i \in S} \left( 2 \frac{w_i n}{W} + 1 \right) = 3n
	\end{equation}
	and
	\begin{align}
		\nonumber
		\max_{j \in S} \frac{w_j}{c_j} \leq \max_{j \in S} \frac{w_j}{\frac{1}{2} w_j \frac{n}{W}} = 2 \frac{W}{n}
	\end{align}
	which implies
	\begin{equation}
		\nonumber
		p \geq \frac{1}{6}
	\end{equation}
	and thus the expected number of iterations is at most $6$.
\end{proof}

\noindent
Lastly, we show the efficiency of the update procedure.
\begin{theorem}
	\label{th:dynamic-update-fast}
	Given a proposal array $P$ for distribution $(S, \mathbf{w})$ and an updated distribution $(S, \mathbf{w})'$ with at most one index $i \in S \cup S'$ such that (1) $i \notin S$ and $w_i' = \Delta$ or (2) $i \notin S'$ and $w_i = \Delta$ or (3) $w_i' \neq w_i$ and $| w_i' - w_i | = \Delta$, Algorithm \ref{algo:update} (Update) runs in amortized time $\Oh{\Delta / \bar{w}}$.
\end{theorem}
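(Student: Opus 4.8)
The plan is to split the running time of a single call to Update into two parts: the cost of the Insert and Erase operations performed in the update branch of Algorithm \ref{algo:update}, and the cost of a reconstruction via Construct in the reconstruction branch. I would bound the former directly as a worst-case quantity, and charge the latter in an amortized fashion to the updates that have occurred since the previous reconstruction. The two parts are almost independent, and only the reconstruction requires a genuine amortization argument.

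For the update branch, each iteration of the two loops performs one Insert or Erase, which runs in time $\Oh 1$ by Lemma \ref{th:array-methods}, so the cost is proportional to $|c - c'|$ plus $\Oh 1$ of overhead. In all three cases (insertion, removal, weight change) the old and new counts are ceilings of $w_i/\bar w^{(r)}$ and $w_i'/\bar w^{(r)}$ (with a count of $0$ when the index is absent, as in the proof of Lemma \ref{lem:p-suitable}), so $|c - c'| \le \lceil |w_i' - w_i|/\bar w^{(r)}\rceil = \lceil \Delta/\bar w^{(r)}\rceil$. Since the reconstruction guard guarantees $\bar w^{(r)} \ge \bar w^{(t)}/2$ throughout the window between reconstructions (the same inequality already used in Lemma \ref{lem:p-suitable}), this is $\Oh{\lceil \Delta/\bar w\rceil} = \Oh{\Delta/\bar w}$, matching the claimed bound for the non-reconstruction work.

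The crux is amortizing the reconstruction cost. A reconstruction rebuilds an array of size $\Oh{n^{(t)}}$ by Theorem \ref{th:prpll-con-fast}, so I would show that between two consecutive reconstructions, at times $r$ and $t$, the updates in the window $(r,t]$ accumulate a total budget $\sum_{r < j \le t}\lceil\Delta_j/\bar w^{(r)}\rceil = \Omega(n^{(t)})$; attributing an $\Oh 1$ share of the reconstruction cost to each unit of this budget then adds only $\Oh{\lceil\Delta_j/\bar w\rceil}$ to the amortized cost of each update $j$, which is absorbed into the claimed bound. To establish the budget bound, I would first note that the budget is at least both the number of updates $t - r$ and the sum $\sum_j \Delta_j/\bar w^{(r)}$, and that the net weight change satisfies $\sum_j \Delta_j \ge W^{(t)} - W^{(r)}$. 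A reconstruction is triggered precisely when $\bar w^{(t)} = W^{(t)}/n^{(t)}$ leaves $[\bar w^{(r)}/2,\, 2\bar w^{(r)}]$; expanding the trigger inequality $W^{(t)} > 2\bar w^{(r)} n^{(t)}$ and using $W^{(r)} = \bar w^{(r)} n^{(r)}$ gives $\sum_j\Delta_j/\bar w^{(r)} > 2n^{(t)} - n^{(r)}$ (nontrivial exactly when its right-hand side is positive). Complementarily, any shrinkage of the index set forces at least $n^{(r)} - n^{(t)}$ removals, hence $t - r \ge n^{(r)} - n^{(t)}$. The maximum of these two bounds is always $\Omega(n^{(t)})$, and the opposite trigger direction $W^{(t)} < \tfrac12\bar w^{(r)} n^{(t)}$ is handled symmetrically using insertions in place of removals.

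I expect the main obstacle to be this final combination: because a single update can alter both $W$ and $n$, the mean can cross the factor-$2$ threshold through either quantity, so neither the weight-change bound nor the index-count bound alone suffices, and a short case analysis on the ratio $n^{(r)}/n^{(t)}$ is needed to see that their maximum is $\Omega(n^{(t)})$ in every regime. A secondary point worth stating carefully is that the per-update bound is really $\Oh{\lceil\Delta/\bar w\rceil}$ rather than literally $\Oh{\Delta/\bar w}$, since a weight change with $\Delta \ll \bar w$ can still flip a single count; this is subsumed by the claimed bound and is precisely what guarantees that each update contributes at least one unit to the reconstruction budget.
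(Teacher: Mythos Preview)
Your proposal is correct and follows the same two-part decomposition as the paper: bound the Insert/Erase work in the non-reconstruction branch by $|c-c'| = \Oh{\Delta/\bar w}$, and amortize the $\Oh{n}$ reconstruction cost over the updates since the last rebuild. The paper's own proof is considerably terser: it simply asserts that doubling or halving the mean ``requires at least $\Omega(W/\Delta)$ updates of size $\Delta$'' and divides, implicitly treating all updates as having the same size $\Delta$ and not separately tracking changes in $n$ versus $W$. Your version is more careful on exactly this point---you use the budget $\sum_j \lceil \Delta_j/\bar w^{(r)}\rceil$, exploit that it dominates both the update count $t-r$ and $\sum_j \Delta_j/\bar w^{(r)}$, and do the short case split on $n^{(r)}$ versus $n^{(t)}$ to show the maximum is $\Omega(n^{(t)})$ for either trigger direction. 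This is a genuine strengthening of the paper's argument (which, as written, does not obviously cover sequences of heterogeneous updates or mean drift driven by index insertions/removals), and your closing remark that the bound is really $\Oh{\lceil \Delta/\bar w\rceil}$ is also a correct refinement that the paper leaves implicit.
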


\begin{proof}
	If the condition in step $1$ is met then the running time is dominated by the call to Algorithm \ref{algo:construct} (Construct) which takes time $\Oh{n}$ by Theorem \ref{th:prpll-con-fast}.
	The condition is met if the mean $\bar{w} = W / n$ halves or doubles with respect to its value at the most recent reconstruction, which requires at least $\Omega(W / \Delta)$ updates of size $\Delta$.
	Therefore, the amortized running time of a single update is $\Oh{n / (W / \Delta)} = \Oh{\Delta / \bar{w}}$.
	
	If the condition in step $1$ is not met, then the running time is asymptotic to $|c' - c|$, i.e. the number of entries of $P$ that need to be adjusted.
	Let $w_i = 0$ if $i \notin S$ and $w_i' = 0$ if $i \notin S'$.
	Then, in all three cases
	\begin{align}
		\nonumber
		|c' - c| &= \left | \left \lceil \frac{w_i'}{\bar{w}} \right \rceil - \left \lceil \frac{w_i}{\bar{w}} \right \rceil \right | 
		\\
		\nonumber
		&< 1 + \frac{| w_i' - w_i |}{\bar{w}} 
		\\
		\nonumber
		&= 1 + \frac{\Delta}{\bar{w}} = \Oh{\frac{\Delta}{\bar{w}}}
	\end{align}
	as claimed.
\end{proof}

\section{Dynamic Proposal Array*}
\label{sec:algorithm-3}

The method described in \cref{sec:algorithm-2} comes with the drawback that the array needs to be reconstructed after a certain number of updates.
This could be undesirable for applications where each update is under tight time constraints.
To address this issue, we describe a variant called Proposal Array* which de-amortizes the update procedure.
There are some standard techniques to de-amortize data structures such as \emph{lazy rebuilding} \cite{overmars1981worst}.
While the idea of the dedicated method we describe below bears similarities to lazy rebuilding, it comes with the advantage of being in-place, e.g. we only maintain one version of the data structure.
For simplicity, we limit the description and proofs to changes in weight.
An extension is straightforward by modifying the constants to account for the possibility that $n^{(t)} = n^{(t-1)} \pm 1$.

The idea is to augment the update procedure with additional steps which over time amount to the same result as a reconstruction.
In this way, the total amount of work remains unchanged but is now split fairly among individual updates according to update size.
Concretely, after every update we now check for indices with counts which would increase or decrease if the array was reconstructed.
To this end we use two additional variables $p, q$ initialized to $p \gets 1, q \gets n$ which can be imagined as pointers into an array containing the counts sorted by the indices.
Now, if the mean $\bar{w}$ increases, we check if the count of the index $j$ pointed to by $p$ must be decreased, and if so, erase an entry containing $j$ from $P$, otherwise, we move $p$ to the next index.
Similarly, if the mean decreases, we check if the count of the index $j$ pointed to by $q$ must be increased, and if so, insert an entry containing $j$ into $P$, otherwise, we move $q$ to the previous index.
Naturally, if a pointer hits a boundary, we reset it to its initial location.

What requires some attention is how many of the above maintenance steps are required.
As we want to guarantee a similar invariant as in Lemma \autoref{lem:p-suitable}, we must ensure that the counts of all indices have been checked and adjusted whenever the mean doubles or halves.
Thus, for a previous mean of $\bar{w}^{(t-1)}$ and updated mean of $\bar{w}^{(t)}$, we compute
\begin{equation}
	\nonumber
	s = \left \lceil 3n \log \left( \frac{\bar{w}^{(t)}}{\bar{w}^{(t-1)}} \right) \right \rceil
\end{equation}
and perform $|s|$ maintenance steps where $3n$ is an upper bound on the number of counts which may have to be checked (if $|s| > 3n$, we set $|s| = 3n$).

\subsection{Update*}

\begin{algorithm2e}[!t]
	\KwData{Distribution $(S, \mathbf{w})^{(t)}$, index $i$ of changed weight, proposal array $P$, maintenance pointers $p, q$}
	$c \gets c_i$\;
	$c' \gets \lceil w_i^{(t)} / \bar{w}^{(t)} \rceil$\;
	\uIf{$c' > c$}{
		\For{$k \gets c$ \KwTo $c'$}{
			Insert $i$ into $P$\;
		}
	}
	\uElseIf{$c' < c$}{
		\For{$k \gets c'$ \KwTo $c$}{
			Erase $i$ from $P$\;
		}
	}
	$s \gets \lceil 3n \log( \bar{w}^{(t)} / \bar{w}^{(t-1)}) \rceil$\;
	\For{$k \gets 1$ \KwTo $\min \{ |s|, 3n\}$}{
		$j \gets p \quad $\textbf{ if } $s > 0$ \textbf{ else } $q$\;
		$c \gets c_j$\;
		$c' \gets \lceil w_j^{(t)} / \bar{w}^{(t)} \rceil$\;
		\uIf{$c' > c$}{
			Insert $j$ into $P$\;
		}
		\uElseIf{$c' < c$}{
			Erase $j$ from $P$\;
		}
		\uElse{
			\uIf{$s > 0$}{
				$p \gets p + 1\quad $\textbf{ if } $p < n$ \textbf{ else } $1$\;
			}
			\uElse{
				$q \gets q - 1\quad $\textbf{ if } $q > 1$ \textbf{ else } $n$\;
			}
		}
	}
	\caption{Update*}
	\label{algo:update*}
\end{algorithm2e}

We now describe the modified Update routine, which we call Update*, in detail (see Algorithm \ref{algo:update*}).
After the weight of an element $i$ increases or decreases, we now update $P$ as follows.
We start by computing the new count of $i$ as $c' \gets \lceil w_i^{(t)} / \bar{w}^{(t)} \rceil$ and then insert or erase $i$ an appropriate number of times until its count is adjusted.
Next, we compute the number of maintenance steps as $s \gets \lceil 3n \log(\bar{w}^{(t)}/\bar{w}^{(t-1)}) \rceil$.
We then repeat the following steps $|s|$ times: (1) we determine the index $j$ pointed to by $p$ or $q$, (2) we compute an updated count of $j$ as $c'' = \lceil w_j^{(t)} / \bar{w}^{(t)} \rceil$, (3) if $c_j < c''$, insert $j$ into $P$, or if $c_j > c''$, erase one entry of $j$ from $P$, otherwise if $c_j = c''$, we move $p$ to the next index or $q$ to the previous index.

\subsection{Analysis}

We now show the equivalent of Lemma \autoref{lem:p-suitable} for Proposal Array*.
The $\Oh{1}$ expected sampling time then follows by Theorem \autoref{th:dyn-sample-efficient}.

\begin{lemma}
	\label{lem:p-suitable-star}
	Algorithm \ref{algo:update*} (Update*) maintains a $2$-suitable proposal array $P^{(t)}$ for distribution $(S, \mathbf{w})^{(t)}$.
\end{lemma}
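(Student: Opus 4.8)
The plan is to mirror the proof of Lemma \ref{lem:p-suitable}, replacing the single reconstruction mean $\bar{w}^{(r)}$ by a per-index "last refresh" mean. First I would observe that at every step the count of any index $i$ is exactly $c_i^{(t)} = \lceil w_i^{(t)}/\bar{w}^{(\tau_i)}\rceil$, where $\tau_i \leq t$ is the last step at which the count of $i$ was recomputed against the then-current mean --- either because $i$ was the index whose weight changed, or because one of the maintenance pointers $p, q$ landed on $i$ and corrected its count. Since $w_i$ does not change between $\tau_i$ and $t$, the algebra of Lemma \ref{lem:p-suitable} applies verbatim: assuming $\frac12 \bar{w}^{(t)} \leq \bar{w}^{(\tau_i)} \leq 2\bar{w}^{(t)}$, I get $c_i^{(t)} \geq w_i^{(t)}/\bar{w}^{(\tau_i)} \geq \frac12 w_i^{(t)}/\bar{w}^{(t)}$ and $c_i^{(t)} = \lceil w_i^{(t)}/\bar{w}^{(\tau_i)}\rceil \leq \lceil 2 w_i^{(t)}/\bar{w}^{(t)}\rceil$, which is exactly $2$-suitability. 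Hence the whole lemma reduces to the invariant that the mean has neither doubled nor halved since each index was last refreshed.

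The core of the argument is then to show that the maintenance loop refreshes every index before the mean can move by a factor of two in either direction. I would track the two pointers separately: $p$ advances only on steps where the mean increased ($s > 0$) and $q$ only on steps where it decreased ($s < 0$). Because $s = \lceil 3n \log_2(\bar{w}^{(t)}/\bar{w}^{(t-1)}) \rceil$, summing the uncapped step counts over any run of increasing-mean updates telescopes in the exponent and performs at least $3n \log_2(\bar{w}^{(\text{end})}/\bar{w}^{(\text{start})})$ steps with pointer $p$; in particular a net increase of the mean by a factor of two triggers at least $3n$ $p$-steps, and symmetrically for $q$ on decreases. I would then argue that $3n$ steps of a single pointer always suffice to complete one full cyclic sweep of all $n$ indices: a sweep needs $n$ advance-steps, plus for each index $j$ at most $|c_j - \lceil w_j/\bar{w}\rceil|$ insert/erase steps, and under the inductive hypothesis that the invariant still holds this is bounded by $w_j/\bar{w} + 1$, so the adjustment steps total at most $\sum_j (w_j/\bar{w} + 1) = W/\bar{w} + n = 2n$, giving at most $3n$ overall. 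Therefore the pointer completes a full sweep, touching and refreshing every index, strictly before the mean reaches twice (or half) its value at any index's previous refresh, re-establishing the invariant.

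I would organize this as an induction on the step $t$: the invariant holds initially since all counts are freshly set by Construct, and the bound "one sweep costs at most $3n$ steps" is invoked only while the invariant still holds, which is precisely when it is available --- this resolves the apparent circularity between "counts are within a factor of two" and "sweeps are cheap." The main obstacle I expect is making this sweep-speed accounting airtight: I must relate the cumulative pointer activity, a sum of ceilings of logarithms over a possibly oscillating mean, to the net factor by which the mean has moved, and guarantee that each index is visited at least once in that window even though the two pointers advance independently in opposite directions. A secondary obstacle is the boundary cases --- the cap $\min\{|s|, 3n\}$ when a single update shifts the mean by more than a factor of two (where I must show that one full sweep against the current mean re-establishes the invariant), the pointer wrap-around at the array ends, and, as the paper notes, the constant adjustments required when $n^{(t)} = n^{(t-1)} \pm 1$ under insertions and deletions, which I would absorb into slightly larger constants rather than reprove from scratch.
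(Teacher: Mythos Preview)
Your proposal is correct and follows essentially the same route as the paper: strong induction on $t$, the telescoping bound $\sum_k 3n\log(\bar w^{(k)}/\bar w^{(k-1)})\ge 3n$ to count maintenance iterations accumulated since the mean was last within a factor of two, and the inductively available $2$-suitability to cap one full pointer sweep at $3n$ steps. The paper argues the upper and lower bounds separately and phrases the sweep-cost bound as $|P|\le 3n$ rather than your per-index accounting $\sum_j(|c_j-\lceil w_j/\bar w\rceil|)\le 2n$, but this is the same inequality repackaged.
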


\begin{proof}
	We show the upper bound, the proof of the lower bound is similar.
	Recall that the upper bound states that $c_i^{(t)} \leq \lceil 2 w_i^{(t)} / \bar{w}^{(t)} \rceil$ for all $i \in S$ in any step $t \geq 0$.
	The proof is by strong induction over $t$.
	
	For $t = 0$, the claim holds by the same argument as in Lemma \autoref{lem:p-suitable}.
	Now, assume that the claim holds in all steps $0, \dots, t - 1$, we will show that this implies the claim in step $t$.
	First, observe that the claim trivially holds if there is no step $0 \leq r < t$ such that $\bar{w}^{(r)} > \bar{w}^{(t)} / 2$.
	Otherwise, let $r$ be the most recent such step.
	Now, for each index $i \in S$, there are two possible cases.
	
	If the weight of index $i$ changed in some step $r < t' \leq t$ and $t'$ was the most recent such step, then by the adjustment rule for counts, it holds that $c_i^{(t)} = \lceil w_i^{(t')} / \bar{w}^{(t')} \rceil = \lceil w_i^{(t)} / \bar{w}^{(t')} \rceil \leq \lceil 2 w_i^{(t)} / \bar{w}^{(t)} \rceil$ where the inequality follows since assuming $\bar{w}^{(t')} > \bar{w}^{(t)} / 2$ would contradict the assumption that $r$ was the most recent step such that $\bar{w}^{(r)} > \bar{w}^{(t)} / 2$.
	
	Otherwise, we have $w_i^{(r)} = w_i^{(t)}$.
	In addition, the number of iterations of the loop in lines $12-24$ of Algorithm \ref{algo:update*} with $s > 0$ performed in steps $r + 1, \dots, t$ is at least
	\begin{align}
		\nonumber
		\sum_{k=r+1}^{t} \left \lceil 3n \log \left( \frac{\bar{w}^{(k)}}{\bar{w}^{(k-1)}}\right) \right \rceil &\geq \sum_{k=r+1}^{t} 3n \log \left( \frac{\bar{w}^{(k)}}{\bar{w}^{(k-1)}}\right)
		\\
		\nonumber
		&= 3n \log \left( \prod_{k=r+1}^{t} \frac{\bar{w}^{(k)}}{\bar{w}^{(k-1)}} \right)
		\\
		\nonumber
		&= 3n \log \left( \frac{\bar{w}^{(t)}}{\bar{w}^{(r)}}\right) \geq 3n.
	\end{align}
	Also, the upper bound implies $|P|^{(t)} = \sum_i c_i^{(t)} \leq 3n$, and since the upper bound holds in steps $0, \dots, t-1$ by the induction hypothesis, we have performed at least one iteration of the loop for each count of each index in steps $r + 1, \dots, t$.
	In particular, we had $p = i$ in some step $r < t' \leq t$, and since in each iteration where $p = i$, we decrease the count of $i$ unless $c_i^{(t)} \leq \lceil w_i^{(t')} / \bar{w}^{(t')} \rceil$ we have $c_i^{(t)} \leq \lceil w_i^{(t')} / \bar{w}^{(t')} \rceil = \lceil w_i^{(t)} / \bar{w}^{(t')} \rceil \leq \lceil 2 w_i^{(t)} / \bar{w}^{(t)} \rceil$.
	
	Thus the claim holds in step $t$ in particular and by strong induction in all steps $t \geq 0$.
\end{proof}

\noindent
It only remains to analyze the running time of Algorithm \ref{algo:update*} (Update*).

\begin{theorem}
	\label{th:dynamic-update*-fast}
	Given a proposal array $P$ for distribution $(S, \mathbf{w})$ and an updated distribution $(S, \mathbf{w})'$ with at most one index $i \in S$ such that $w_i' \neq w_i$, Algorithm \ref{algo:update*} (Update*) runs in time $\Oh{\Delta / \bar{w}}$ where $\Delta = | w_i' - w_i |$.
\end{theorem}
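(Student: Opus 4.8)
The plan is to bound the running time by the number of elementary array operations, since by Lemma \autoref{th:array-methods} each Insert or Erase takes time \Oh{1}. Inspecting Algorithm \ref{algo:update*}, the work splits into two parts: the adjustment of the count of the changed index $i$ in lines $1$--$11$, which performs $|c' - c|$ operations, and the maintenance loop in lines $12$--$24$, which performs at most $\min\{|s|, 3n\}$ operations, where $s = \lceil 3n \log(\bar{w}^{(t)}/\bar{w}^{(t-1)}) \rceil$. It therefore suffices to show that both $|c' - c|$ and $\min\{|s|, 3n\}$ are \Oh{\Delta / \bar{w}}, following the paper's convention that the additive constant from the ceilings is absorbed into \Oh{\Delta/\bar{w}} (as in Theorem \autoref{th:dynamic-update-fast}).

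I would dispose of the maintenance loop first, as it is the more routine of the two. Since only the weight of $i$ changes, $\bar{w}^{(t)}/\bar{w}^{(t-1)} = W^{(t)}/W^{(t-1)} = 1 + x$ with $|x| = \Delta / W^{(t-1)}$. In the regime $\Delta \le W^{(t-1)}/2$ I would use the elementary estimate $|\log(1 + x)| = \Oh{|x|}$ valid for $|x| \le 1/2$ to obtain $|s| = \Oh{n \Delta / W^{(t-1)}} = \Oh{\Delta/\bar{w}}$. In the complementary regime $\Delta > W^{(t-1)}/2$ the cap at $3n$ takes over, and there $\Delta/\bar{w} = n\Delta/W^{(t-1)} > n/2$, so $\min\{|s|, 3n\} \le 3n = \Oh{\Delta/\bar{w}}$ as well.

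For the count adjustment I would proceed as in Theorem \autoref{th:dynamic-update-fast}, inserting the intermediate quantity $\lceil w_i/\bar{w}^{(t)} \rceil$ (old weight, new mean) and applying the triangle inequality to $|c' - c| = |\lceil w_i'/\bar{w}^{(t)} \rceil - c|$. The genuine weight-change contribution $|\lceil w_i'/\bar{w}^{(t)} \rceil - \lceil w_i/\bar{w}^{(t)} \rceil| \le 1 + \Delta/\bar{w}^{(t)}$ is immediately \Oh{\Delta/\bar{w}} in the relevant regime, since there $\bar{w}^{(t)}$ and $\bar{w}$ agree up to a constant factor. What remains is the difference $|\lceil w_i/\bar{w}^{(t)} \rceil - c|$ between the current ideal count of $i$ and its stored count $c = c_i$, which is nonzero only because the maintenance pointer may not have reached $i$ since the mean last changed.

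The main obstacle is exactly this last term. The invariant of Lemma \autoref{lem:p-suitable-star} only guarantees that $c$ lies within a constant factor of the ideal, i.e. $\tfrac{1}{2}\, w_i/\bar{w} \le c \le \lceil 2\, w_i/\bar{w} \rceil$, which by itself permits a deviation of order $w_i/\bar{w}$ and hence does not yield an additive \Oh{\Delta/\bar{w}} bound. To close this gap I would track the stored count back to the last step $t'$ at which the maintenance pointer settled on $i$, where by the update rule $c = \lceil w_i/\bar{w}^{(t')} \rceil$ held exactly, and then try to charge the deviation $|\lceil w_i/\bar{w}^{(t)} \rceil - \lceil w_i/\bar{w}^{(t')} \rceil|$ against the $\lceil 3n \log(\bar{w}^{(t)}/\bar{w}^{(t')}) \rceil$ maintenance steps that the schedule forces between $t'$ and $t$, using the same telescoping identity as in Lemma \autoref{lem:p-suitable-star}. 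Making this charging tight enough to recover the per-update bound \Oh{\Delta/\bar{w}} rather than one scaling with $w_i/\bar{w}$ is the crux of the argument and the step I would spend the most care on; the maintenance-loop and weight-change bounds above are routine by comparison.
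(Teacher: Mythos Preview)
Your treatment of the maintenance loop is essentially the paper's argument, stated more carefully. The paper writes
\[
|s| \;\le\; \left|\left\lceil 3n\,\frac{w_i'-w_i}{W}\right\rceil\right| \;=\; \Oh{\Delta/\bar w}
\]
via $\log(1+x)\le x$, which strictly covers only one sign of $x$; your two-regime split (small $\Delta$ using $|\log(1+x)|=\Oh{|x|}$, large $\Delta$ using the $3n$ cap) is a cleaner way to get the same bound.

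Where you diverge from the paper is the count adjustment. The paper's entire argument for lines 1--10 is the single sentence that it ``follows by the same argument as for steps 5--14 of Algorithm~\ref{algo:update} in the proof of Theorem~\ref{th:dynamic-update-fast}'', i.e.\ it asserts $|c'-c|<1+\Delta/\bar w$ without further comment. It does not insert your intermediate quantity $\lceil w_i/\bar w^{(t)}\rceil$, does not invoke the $2$-suitability, and does not attempt any charging. But the ``same argument'' in Theorem~\ref{th:dynamic-update-fast} relies on both $c$ and $c'$ being ceilings over the \emph{same} fixed denominator $\bar w^{(r)}$, whereas here $c' = \lceil w_i^{(t)}/\bar w^{(t)}\rceil$ uses the current mean while $c=c_i$ is a stored value that was last set (by a weight change or by maintenance) relative to some earlier mean. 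Your objection---that the deviation $|c-\lceil w_i/\bar w^{(t)}\rceil|$ is only controlled multiplicatively by Lemma~\ref{lem:p-suitable-star} and so could in principle scale like $w_i/\bar w$ rather than $\Delta/\bar w$---is exactly the point the paper glosses over.

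So you are not reconstructing the paper's proof; you are being more rigorous than it. The charging idea you sketch (trace $c_i$ back to the last step at which it was exact and bound the drift by the intervening maintenance budget, telescoping as in Lemma~\ref{lem:p-suitable-star}) is the natural route to close this, but you are right that it is the crux and the paper offers no guidance on it. If you want to match the paper literally, just cite Theorem~\ref{th:dynamic-update-fast}; if you want a complete proof, the work you flag as remaining is genuine.
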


\begin{proof}
	The running time bound on steps $1-10$ follows by the same argument as for steps $5-14$ of Algorithm \ref{algo:update} in the proof of Theorem \ref{th:dynamic-update-fast}.
	The running time of the remaining steps is asymptotic to
	\begin{align}
		\nonumber
		|s| &= \left | \left \lceil 3n \log \left( \frac{\bar{w}'}{\bar{w}} \right) \right \rceil \right | 
		\\
		\nonumber
		&\leq \left | \left \lceil 3n \log \left( \frac{W'}{W} \right) \right \rceil  \right |
		\\
		\nonumber
		&\leq \left | \left \lceil 3n \log \left( \frac{W + w_i' - w_i}{W} \right) \right \rceil  \right |
		\\
		\nonumber
		&\leq \left | \left \lceil 3 n \frac{w_i' - w_i}{W}\right \rceil \right | = \Oh{\frac{\Delta}{\bar{w}}}
	\end{align}
	where the last inequality follows by $\log (1 + x) \leq x$ for $x > -1$.
\end{proof}

\section{Implementation Details}
\label{sec:implementation}
This section highlights some details of our implementations.
Our main focus is reducing unstructured memory accesses which easily dominate the sampling time of the data structures.

For our implementation of the Alias Table (see \autoref{subsec:experiments-static}), we store the table as an array of structs rather than multiple arrays which allows us to look up an index, alias, and threshold with one access to memory.

For our Proposal Array implementations, we reduce the number of memory accesses to one per rejection sampling attempt.
The idea is to regard each entry of the array $P$ as a bucket of capacity $\bar{w}$.
We then split up the total weight $w$ of an index with count $c$ among $c - 1$ full buckets and a final bucket to store the remainder $r = w - (c - 1) \bar{w}$.
As an entry associated with a full bucket can be accepted with probability $1$, it suffices to store the remainders in the first $n$ positions of $P$ and the indices of the full buckets in the remaining $|P| - n$ positions.
When looking up a random location $l$ of $P$, we now either have $l \leq n$, and accept index $i = l$ with probability $P[l] / \bar{w}$, or we have $l > n$, in which case we immediately accept and return the index $i = P[l - n + 1]$.
The same approach is straightforward to translate to the Dynamic Proposal Array (\autoref{sec:algorithm-2}) as the mean $\bar{w}^{(r)}$ is fixed between reconstructions.

Optimizing the implementation of the Dynamic Proposal Array* (\autoref{sec:algorithm-3}) is more challenging as the mean $\bar{w}^{(t)}$ is not fixed which prevents us from assuming the same capacity for full buckets of different indices.
For this reason we first restrict the permissible capacities of the buckets to powers of two times the initial mean $\bar{w}^{(0)}$.
The maintenance routine then guarantees that all full buckets at a given step $t$ have one of two possible capacities.
This allows us to infer the necessary correction to the acceptance probability of a proposed index on the fly by comparing the index to the position of a common maintenance pointer which assumes the roles of both $p$ and $q$.
Moreover, we are able to perform the correction of the acceptance probability for full buckets in a rejection-free way by exploiting that the capacities of buckets are related by powers of two.

\section{Experiments}
\label{sec:experiments}
In this section, we study the previously discussed algorithms and data structures empirically.
Sources of our implementations, benchmarks, and tests are openly available (see title page).

The benchmarks are built with \texttt{GNU g++-11.2} and \texttt{rustc-1.68.0-nightly} and executed on a machine equipped with an Intel Core i5-1038NG processor and 16~GB RAM running macOS 12.6.
Pseudo-random bits are generated using the MT19937-64 variant of the Mersenne Twister~\cite{DBLP:journals/tomacs/MatsumotoN98}.

\subsection{Static Data Structures}

\label{subsec:experiments-static}
In Section \ref{sec:introduction} we argue that the static Proposal Array is easier to construct and dynamize than the Alias Table.
However, it could be the case that these benefits are outweighed by the less efficient rejection based sampling procedure.
Therefore, we first compare the performance of the static Proposal Array to our implementation of the Alias Table\footnote{We also considered \texttt{std::discrete\_distribution} from the c++ standard library. However, during preliminary experiments, we found that this implementation was $7-12$ times slower.}.
As a reference point to a more naive sampling method, we additionally implement a binary tree which allows for sampling in time $\Oh{\log n}$.
The binary tree implementation is reasonably optimized, in particular, we only call the RNG once per sample.

\subsubsection*{Construction}

\begin{figure}[!t]
  \centering
  \includegraphics[scale=0.55]{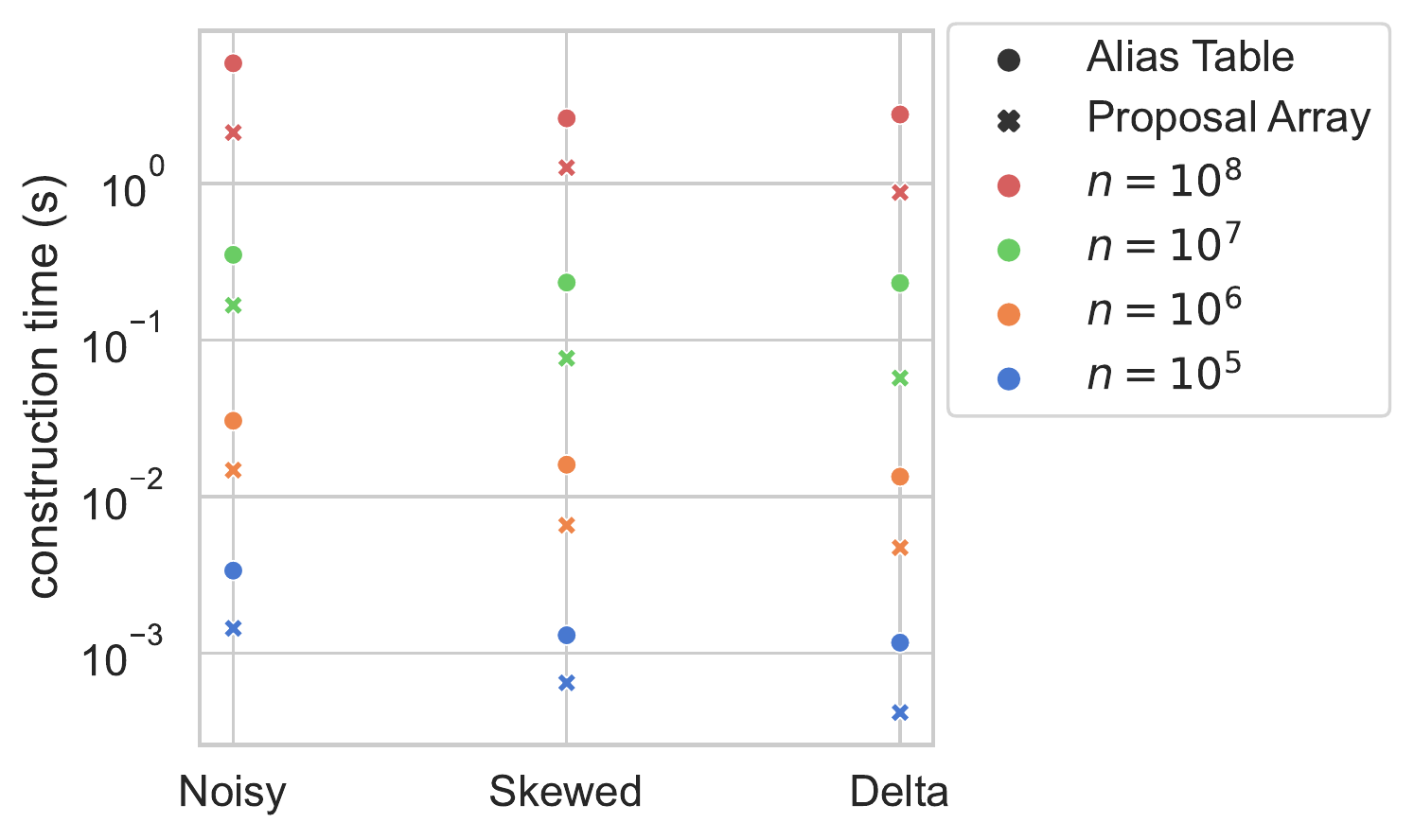}
  \caption{Average construction times of Alias Table and Proposal Array in seconds for index sets of various sizes.}
 \label{fig:static-construction}
\end{figure}

We start by comparing the performance of the construction procedures of the Alias Table and Proposal Array.
As input we use the following types of weight distributions on an index set $S = \{1, \dots, n\}$.
\begin{itemize}
	\item \textbf{Noisy:} For each index $i \in S$ we draw a real-valued weight $w_i \in \mathbb R$ uniformly at random from the interval $[0, n)$.
	\item \textbf{Skewed:} For each index $i \in S$ we draw an integer weight $w_i \in \mathbb N_{>0}$ with $p(w_i = k) \propto 1 / k^2$, i.e. the weights follow a power-law distribution.
	\item \textbf{Delta:} We draw $n - 1$ real-valued weights uniformly at random from $[0, 1)$ and set the remaining weight to $w_n = n$.
\end{itemize}

\noindent
\autoref{fig:static-construction} shows the results.
On all inputs, the construction of the Proposal Array is faster than the construction of the Alias Table by a factor of at least $2$.

\subsubsection*{Sampling}

\begin{figure}[!t]
  \centering
  \includegraphics[scale=0.55]{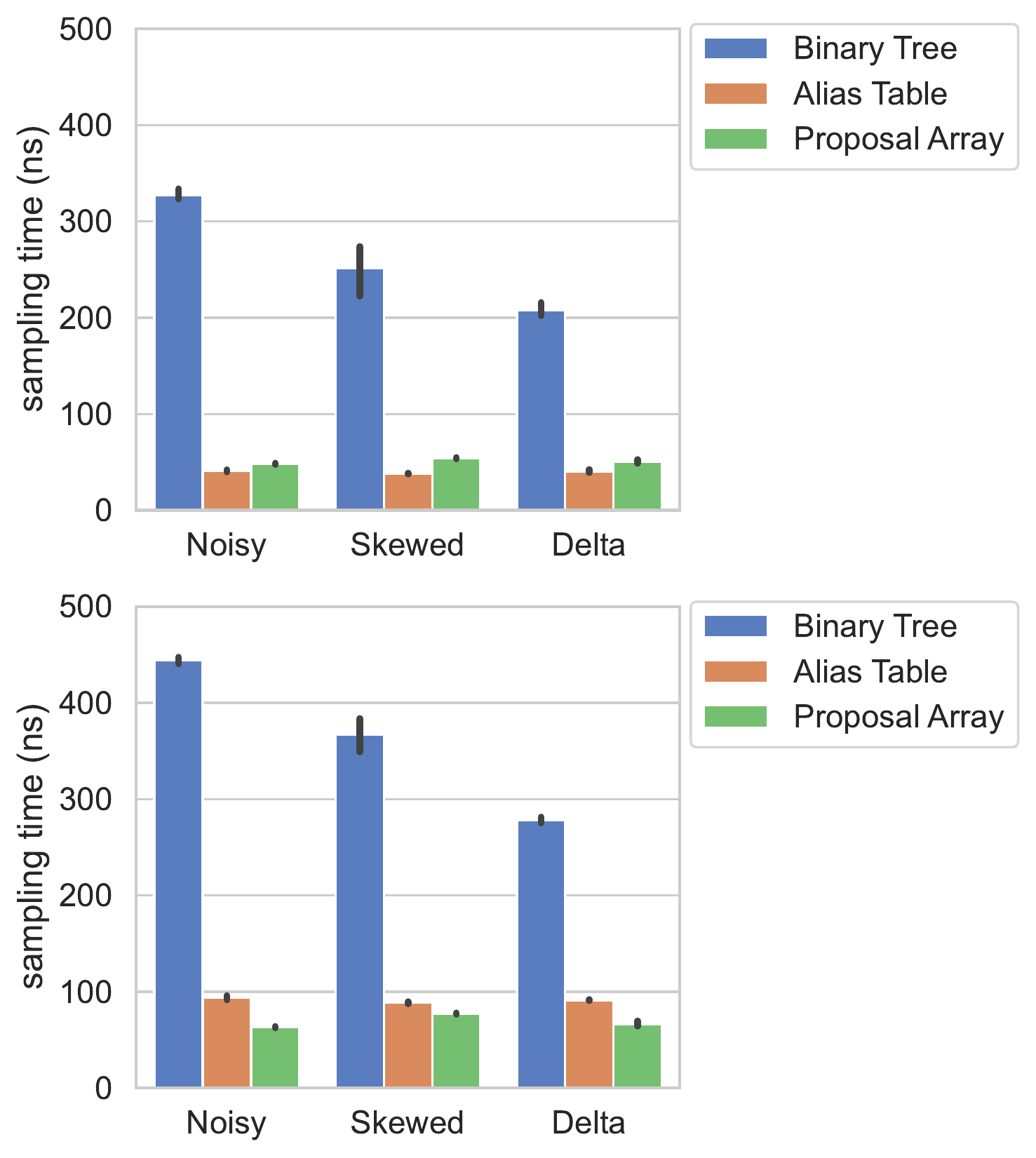}
  \caption{Average sampling times of Alias Table, Proposal Array and Binary Tree in nanoseconds on index sets of size $n=10^7$ (top) and $n = 10^8$ (bottom). The black vertical bars indicate the $95\%$ confidence interval.}
  \label{fig:static-sampling}
\end{figure}

In the second experiment we evaluate the sampling time of the data structures.
As input we consider the same three types of weight distributions as above for index sets of size $n = 10^7$ and $n = 10^8$.
We then measure the time required to draw $10^6$ samples and plot the average time per sample.

\autoref{fig:static-sampling} summarizes the results.
Sampling from a Proposal Array is slower than sampling from an Alias Table for $n = 10^7$ but faster for $n = 10^8$.
The binary tree is not competitive with the constant sampling time data structures.

Overall the results suggest that it is reasonable to pursue the rejection sampling approach for the benefit of easier dynamization.

\subsection{Dynamic Data Structures}

We move on to evaluating the dynamic \impl{Proposal Array} and the variant \impl{Proposal Array*}.

\subsubsection*{Processing of Updates}

\begin{figure}[!t]
 \centering
 \begin{subfigure}{\columnwidth}
  \centering
  \includegraphics[scale=0.55]{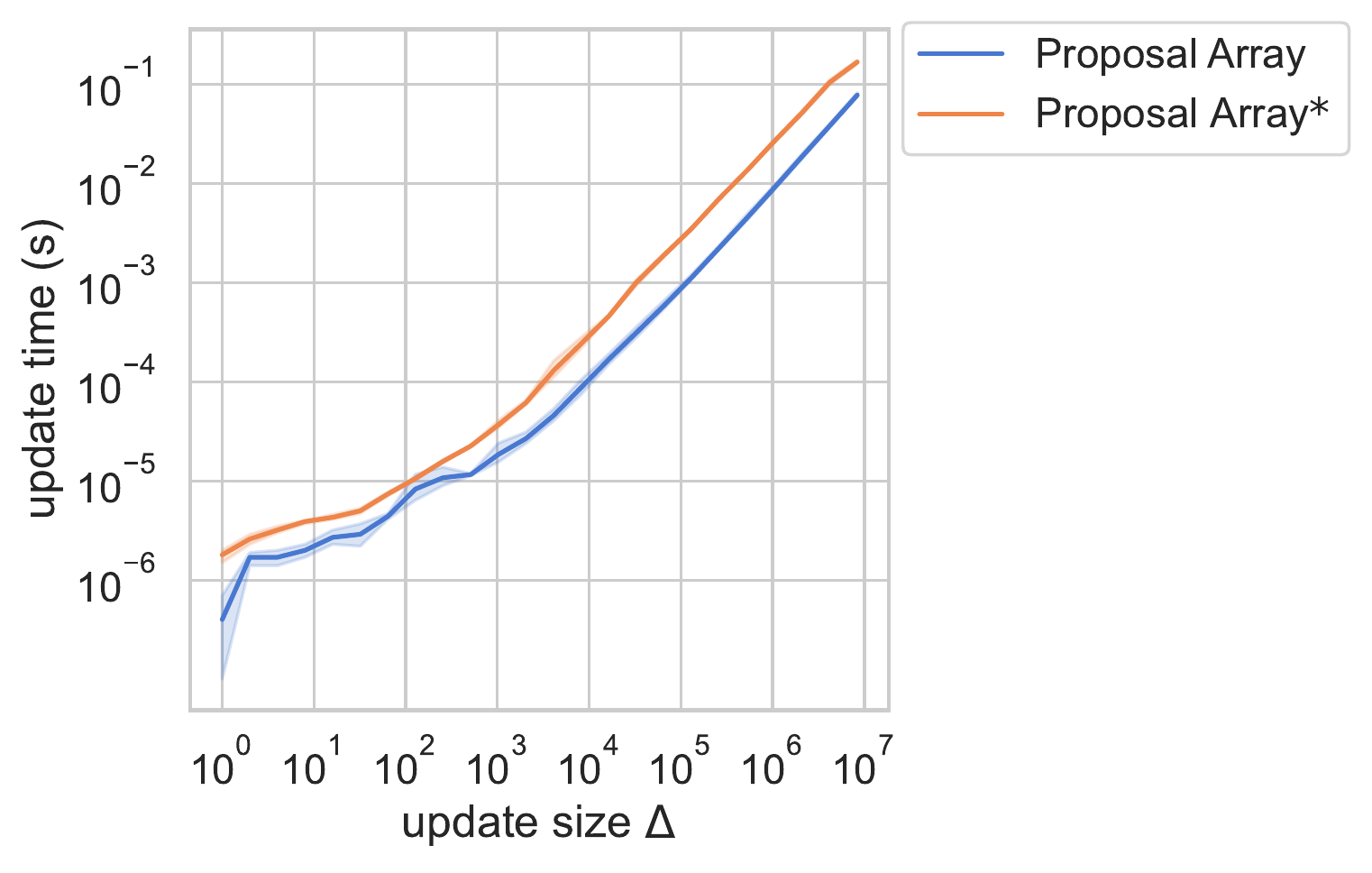}
  \caption{Processing times of updates of exponentially increasing size.}
 \label{fig:dynamic-increasing}
 \end{subfigure}
 \vfill
 \begin{subfigure}{\columnwidth}
  \centering
  \includegraphics[scale=0.55]{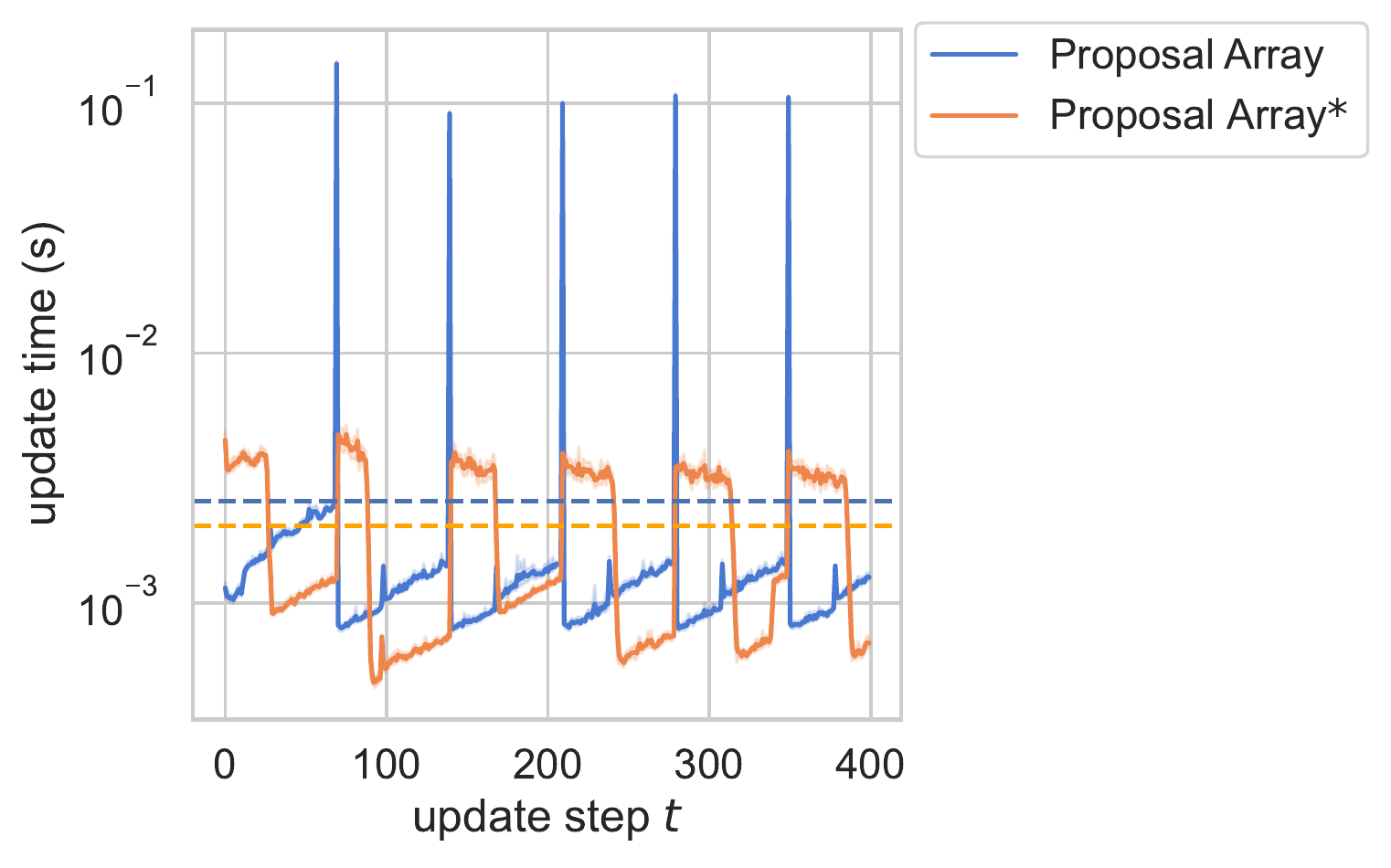}
  \caption{Processing times for updates of size $\Delta = 10^6 \bar{w}^{(t)}$. The dashed lines indicate the means over all steps.}
  \label{fig:dynamic-constant}
 \end{subfigure}
 \caption{Running times of the update procedures of \impl{Proposal Array} and \impl{Proposal Array*}. The shaded regions indicate the $95\%$ confidence interval.}
\end{figure}

We start by comparing the performance of the update procedures of \impl{Proposal Array} and \impl{Proposal Array*} to evaluate the benefit of avoiding reconstructions.

Starting from a uniform distribution $w_i = 1$ for $i \in S$ with $n = 10^7$ we increase the weight of random indices and measure the average processing times.
For updates of increasing size we observe linear scaling in the update size for both data structures (see \autoref{fig:dynamic-increasing}).
In contrast, scaling the update size with $\bar{w}^{(t)}$ causes the average processing time to remain constant (see \autoref{fig:dynamic-constant}).
While the processing time oscillates for both data structures, spikes for the regular \impl{Proposal Array} are up to two orders of magnitude larger due to the high cost of reconstructions. 
In most steps updates of \impl{Proposal Array*} are slower due to the additional work required.
However, on average updates of \impl{Proposal Array*} are faster by a factor of $1.3$, which suggests that the overhead of a full reconstruction exceeds the additional work required for the gradual reconstruction.

\subsubsection*{Dynamic Sampling}

\begin{figure}[!t]
 \centering
 \begin{subfigure}{\columnwidth}
  \centering
 \includegraphics[scale=0.5]{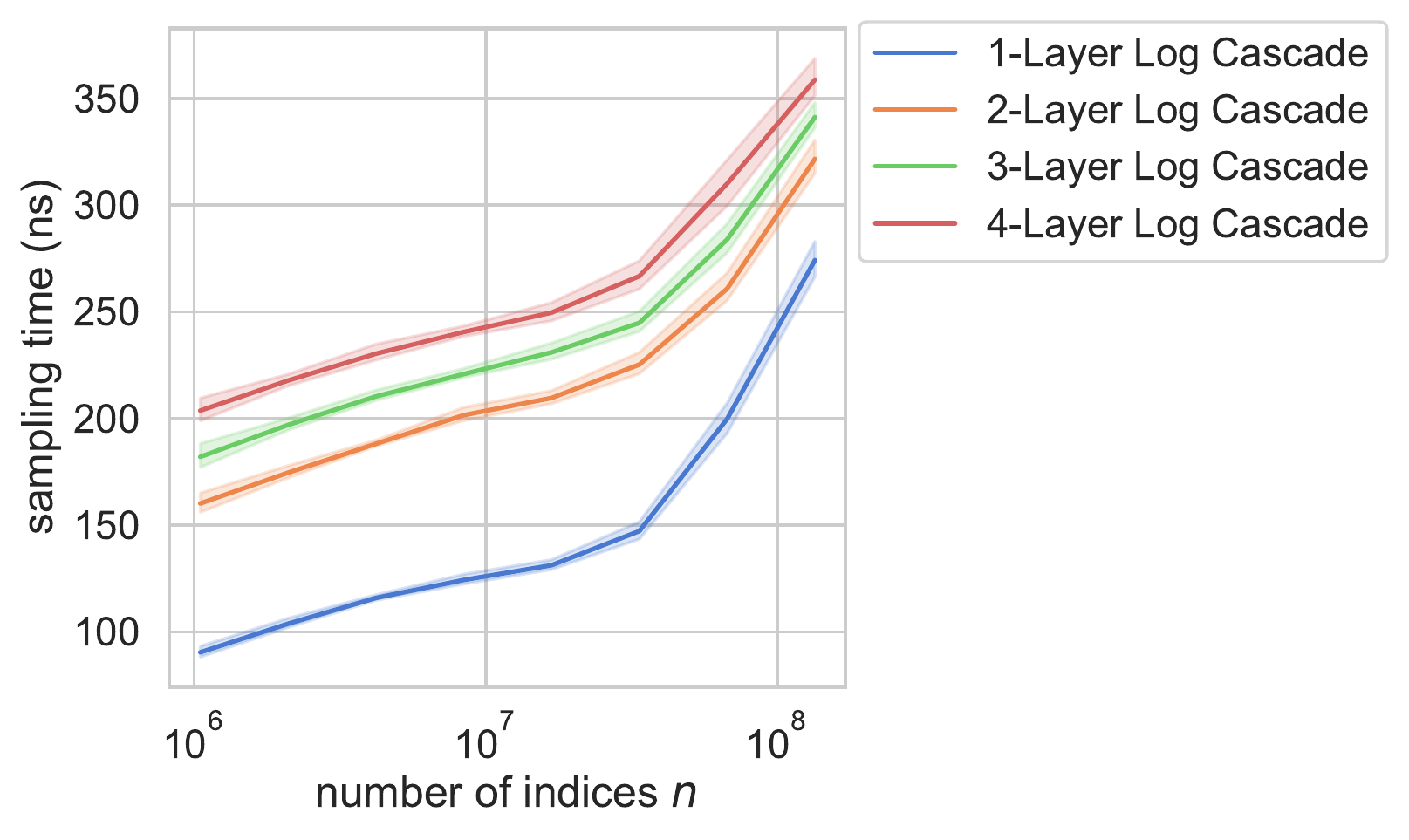}
 \end{subfigure}
 \caption{Sampling performance of \impl{Log Cascade} implementations with varying numbers of layers. Each data point is averaged over $10$ random weight distributions of the \textbf{Noisy} type (see \autoref{subsec:experiments-static}). The shaded regions indicate the $95\%$ confidence interval.}
 \label{fig:log-cascade-layers}
\end{figure}

We move on to evaluating the dynamic sampling performance.

For a comparison with the state of the art we use the data structures of \cite{hagerup1993maintaining} and \cite{matias2003dynamic}.
Our implementation of \cite{hagerup1993maintaining} uses one partition layer rather than multiple layers with a table look-up, which we find to result in a superior performance in practice (see \autoref{fig:log-cascade-layers} and \autoref{sec:a-log-cascade}).
As implementation of \cite{matias2003dynamic} we use \texttt{dynamic-weighted-index}\footnote{\url{https://crates.io/crates/dynamic-weighted-index}} (see \cite{allendorf2023parallel} for implementation details).
While this implementation is only available in the rust programming language, we believe that the languages are similar enough in performance to allow for a meaningful comparison.
In the following, we refer to these implementations as \impl{Log Cascade} and \impl{Weighted Index}, respectively.
Finally, we include a dynamized variant of the \impl{Binary Tree} used in \autoref{subsec:experiments-static}.

\begin{figure}[t!]
 \centering
 \begin{subfigure}{\columnwidth}
  \centering
  \includegraphics[scale=0.55]{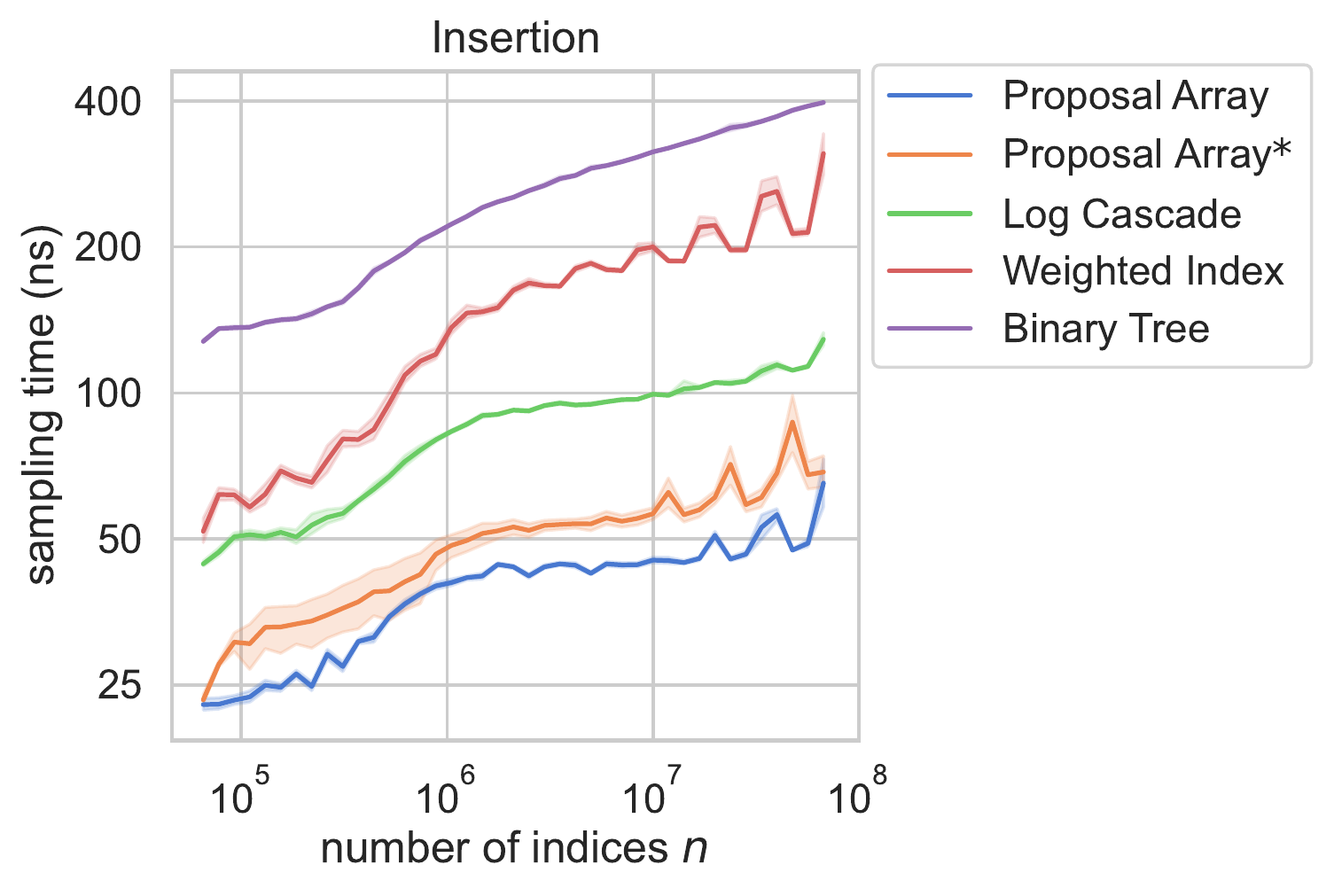}
 \end{subfigure}
 \vfill
 \begin{subfigure}{\columnwidth}
  \centering
  \includegraphics[scale=0.55]{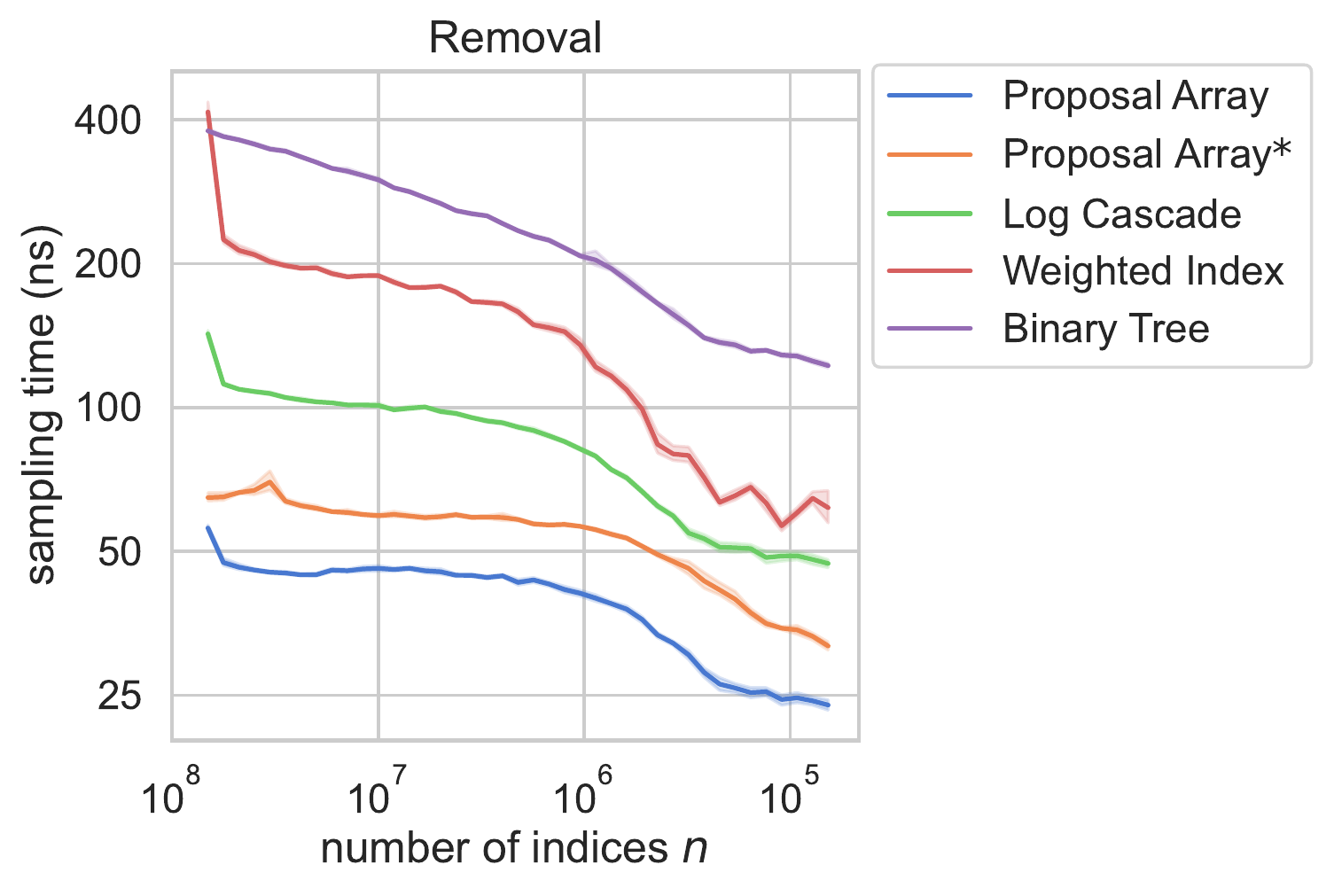}
 \end{subfigure}
 \caption{Average sampling times in nanoseconds of \impl{Proposal Array}, \impl{Proposal Array*}, \impl{Log Cascade}, \impl{Weighted Index} and \impl{Binary Tree} under insertions and removals of indices. The shaded regions indicate the $95\%$ confidence interval.}
  \label{fig:dynamic-sampling-2}
\end{figure}

In the first set of experiments, we compare the sampling performance of all five implementations under insertions and deletions of indices.
To reduce effects not caused by insertions or removals, all weights throughout are drawn uniformly at random from the interval $[0, 10^7) \subset \mathbb R$.
The update patterns are as follows.
\begin{itemize}
	\item \textbf{Insertion}: We start from an index set of size $n = 2^{16}$ and increase the size to $n = 2^{26}$.
	\item \textbf{Removal}: We start from an index set of size $n = 2^{26}$ and decrease the size to $n = 2^{16}$.
\end{itemize}
\autoref{fig:dynamic-sampling-2} plots the average time per sample for both update patterns.
All data structures exhibit a degradation in sampling performance for the \textbf{Insertion} pattern, and an improvement in sampling performance for the \textbf{Removal} pattern due to cache effects.
The relative performances follow the same trend as for changes in weight (see \autoref{fig:dynamic-sampling} below).

\begin{figure}[t!]
 \centering
 \begin{subfigure}{\columnwidth}
  \centering
  \includegraphics[scale=0.55]{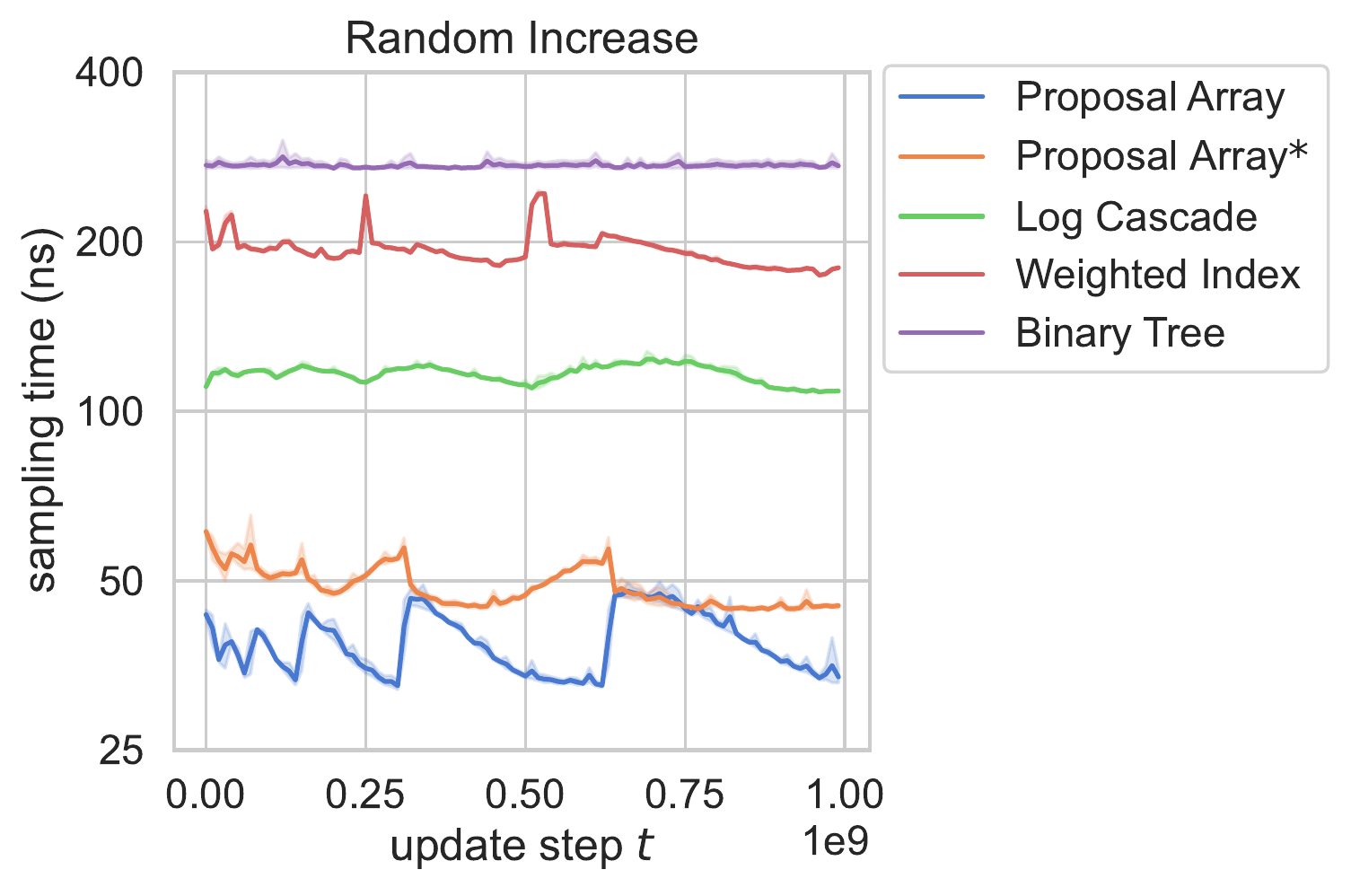}
 \end{subfigure}
 \vfill
 \begin{subfigure}{\columnwidth}
  \centering
  \includegraphics[scale=0.55]{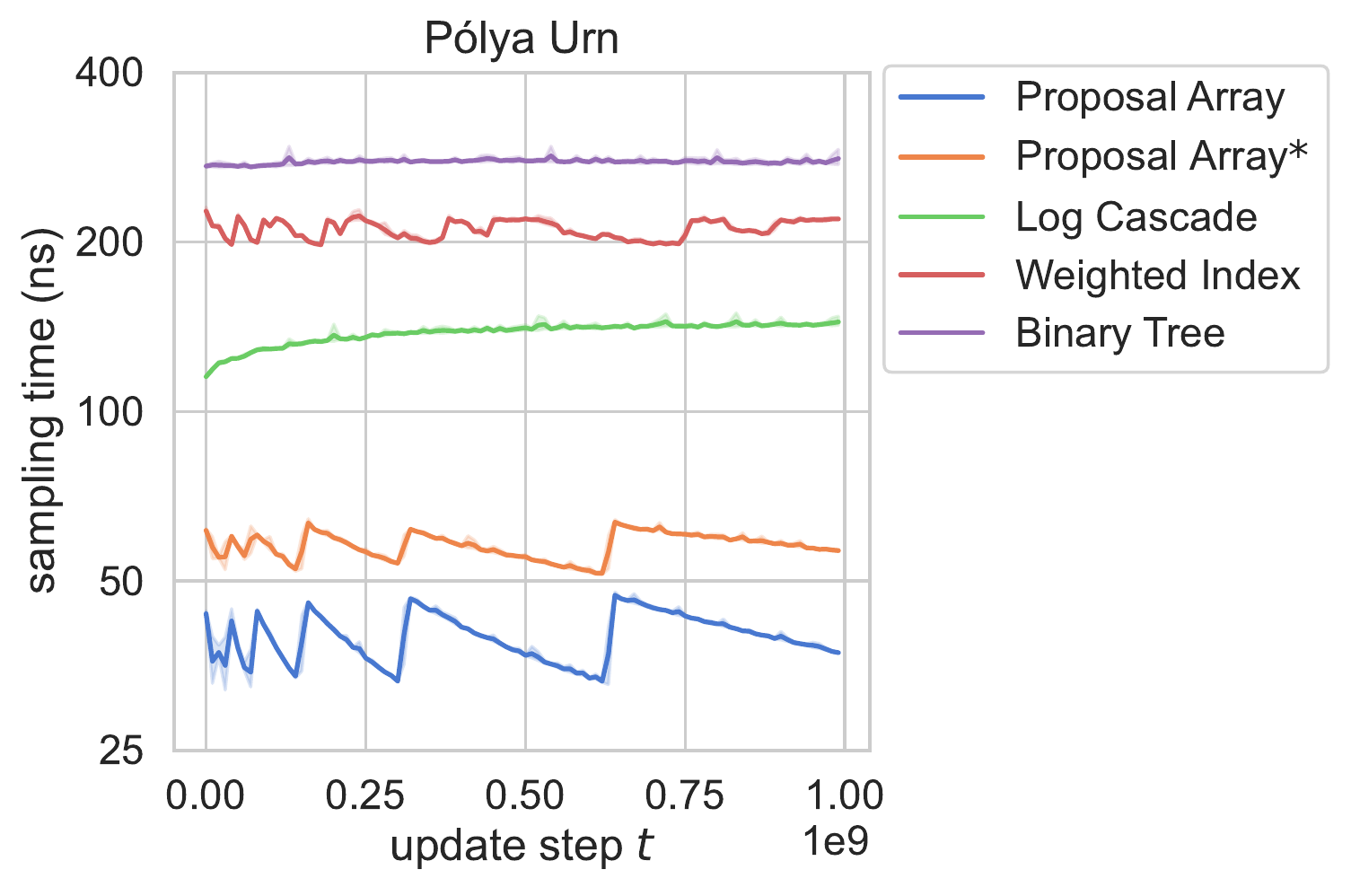}
 \end{subfigure}
 \vfill
 \begin{subfigure}{\columnwidth}
  \centering
  \includegraphics[scale=0.55]{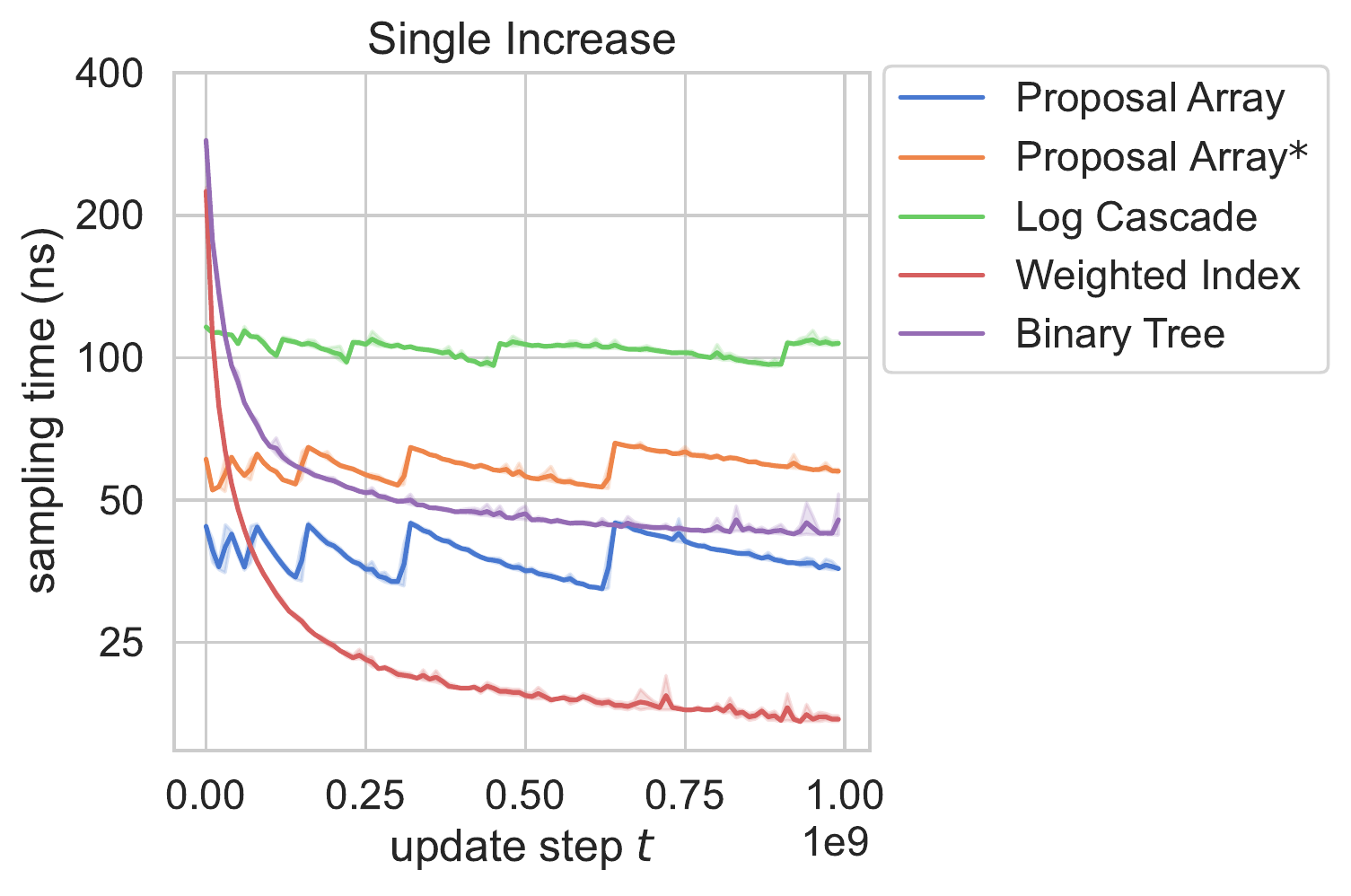}
 \end{subfigure}
 \caption{Average sampling times in nanoseconds of \impl{Proposal Array}, \impl{Proposal Array*}, \impl{Log Cascade}, \impl{Weighted Index}, and \impl{Binary Tree} for various update patterns. The shaded regions indicate the $95\%$ confidence interval.}
  \label{fig:dynamic-sampling}
\end{figure}

In the second set of experiments, we consider changes in weight.
Starting from the \textbf{Noisy} weight distribution (see \autoref{subsec:experiments-static}) on an index set of size $n = 10^7$, we perform $T = 100 n$ update steps.
Once every $T / 100$ update steps, we measure the time required to draw $10^6$ samples and plot the average time per sample as a function of the update step.
We consider the following update patterns where each increase in weight $\Delta_t$ is drawn independently and uniformly at random from the interval $[0, n) \subset \mathbb R$.
\begin{itemize}
	\item \textbf{Random Increase}: In step $t$ we pick an index $i \in S$ uniformly at random and increase its weight by $\Delta_t$.
	\item \textbf{P\'{o}lya Urn}: In step $t$ we sample an index $i \in S$ from the distribution $(S, \mathbf{w})^{(t-1)}$ and increase its weight by $\Delta_t$.
	\item \textbf{Single Increase}: In step $t$ we increase the weight of the index $i = 1$ by $\Delta_t$.
\end{itemize}
Note that the resulting weight distributions in the limit are similar to the weight distributions used in the static sampling experiment (see \autoref{subsec:experiments-static}).

\autoref{fig:dynamic-sampling} shows the results.
We find that the sampling performances of both variants of the dynamic \impl{Proposal Array} are comparable to the static version (compare \autoref{fig:static-sampling}) with \impl{Proposal Array*} being slower by a factor of $1.3$.
Both variants exhibit a regular hacksaw pattern in the sampling time due to the array improving before the mean doubles.

The implementation \impl{Log Cascade} is slower than the \impl{Proposal Array} by a factor of at least $3.1$ for the \textbf{Random Increase} and  \textbf{P\'{o}lya Urn} update patterns but only slower by a factor of $2.7$ for the \textbf{Single Increase} pattern.
The \impl{Weighted Index} is slower than the \impl{Proposal Array} for the \textbf{Random Increase} (factor $5.1$) and \textbf{P\'{o}lya Urn} (factor $5.4$) update patterns but faster by a factor of $1.5$ for the \textbf{Single Increase} patterns.
The dynamic \impl{Binary Tree} behaves similarly to the tree-like \impl{Weighted Index}, but is slower by a factor of $1.4-2.2$.

The speed-up of \impl{Log Cascade} for the \textbf{Single Increase} pattern occurs as the concentration of probability mass on a single index allows most samples to be drawn from the same partition which can then always be held in cache.
For the \impl{Weighted Index} the effect is even more pronounced as the implementation contains optimizations which trivialize the task of sampling from a probability distribution dominated by a single index.
In contrast, the \impl{Proposal Array} and \impl{Proposal Array*} always access a random location in the array and do not benefit even if most locations are occupied by the same index.
Still, this should not present a significant issue in practice, since the optimizations required to improve the performance are straightforward, e.g. if we expect a small subset of indices to dominate the distribution, then these can be treated separately.

\section{Summary}
\label{sec:summary}
We suggest the Proposal Array as a simple and practical method for maintaining a discrete probability distribution under updates of moderate size.
Our experimental results demonstrate that sampling from a dynamic Proposal Array is not slower than sampling from a static Alias Table and faster than sampling from more general but also more complex dynamical solutions.
The variant Proposal Array* improves the update procedure by avoiding reconstructions of the array with a minor slowdown in sampling speed.
Among the fully dynamic solutions, we find that the single-layered Log Cascade performs particularly well.

\bibliography{asa-bibliography}

\appendix

\section{Log Cascade Implementation}
\label{sec:a-log-cascade}
In this appendix, we briefly discuss our implementation of \cite{hagerup1993maintaining}.

The original method is based on the observation that rejection sampling is efficient if the weights $w_1, \dots, w_n$ all fall in a similar range of values.
In particular, the expected number of attempts of rejection sampling is at most $c$ if $w_1, \dots, w_n \geq w_{\text{max}} / c$ for some $w_{\text{max}} \geq w_1, \dots, w_n$.
Moreover, for general weights, the set of indices can be partitioned into subsets $S_{\lceil \log w_{\text{min}} \rceil}, \dots, S_{\lceil \log w_{\text{max}} \rceil}$ where the subset $S_k$ contains all indices whose weights fall into the range $[2^{k-1}, 2^{k})$. 
As rejection sampling can now be used to efficiently sample an index from a subset, it only remains to choose a subset, and it is easy to verify that there are at most $\Oh{\log n}$ subsets to choose from if the weights are polynomial in $n$.
Now iteratively using the partitioning scheme on the subsets, we can further reduce the number of elements to choose from, and in general, after using $k$ layers of partitioning, the remaining number of elements in the top layer is $O(\log^{(k)} n)$ where $\log^{(k)}$ is the iterated logarithm.
Finally, the $\Oh{1}$ sampling time is obtained by choosing $k$ as a large enough constant so that the results of all possible operations on the $N$ remaining elements in the top layer can be pre-computed and stored in a look-up table of size $N^{\Oh{N}}$.

Our implementation uses this same partitioning scheme and supports an arbitrary number of layers.
Similarly to our other implementations, we reduce the number of memory accesses to one per rejection sampling attempt by storing the acceptance probability of an index directly together with the index.
Interestingly, our experiments suggest that using a single partitioning layer outperforms using any larger number of layers (see \autoref{fig:log-cascade-layers}) for moderately large index sets ($n = 10^6$ to $n = 10^8$).
This occurs despite the fact that we use linear sampling instead of the table look-up to sample a subset in the top layer (note that we cannot use the look-up table after one layer of partitioning as the remaining number of elements is still too large).
On the other hand, it seems plausible that linear sampling from between $\log 10^6 \approx 20$ to $\log 10^8 \approx 26$ elements is faster than sampling from two or more partitioning layers, as the latter approach at least doubles the number of calls to the RNG and memory accesses to the acceptance probability of subsets/indices.

\end{document}